\def\BibTeX{{\rm B\kern-.05em{\sc i\kern-.025em b}\kern-.08emT\kern-.1667em\lower.7ex\hbox{E}\kern-.125emX}}
\date{\today}
\newtheorem{thm}{Theorem}[section]
\newtheorem{defn}[thm]{Definition}
\newcommand{\R}{\mathbb{R}}
\newcommand{\tsr}[1]{\pmb{\mathcal{#1}}}
\newcommand{\vcr}[1]{\mathbf{#1}}
\newcommand{\mat}[1]{\mathbf{#1}}
\newcommand{\inti}[2]{\{{#1},\ldots, {#2}\}}
\newcommand{\name}[1]{{\color{blue}[name] }}
\definecolor{mygreen}{rgb}{0,0.2,0}
\definecolor{mygray}{rgb}{0.5,0.5,0.5}
\definecolor{mymauve}{rgb}{0.58,0,0.82}
\definecolor{mypurple}{rgb}{0.38,0,0.32}
\definecolor{myblue}{rgb}{0.1,0,0.32}
\newcommand{\costyle}{\footnotesize\ttfamily\bfseries}
\newcommand{\kwstyle}{\costyle\textcolor{myblue}}
\tiny\color{mygray}, 
\begin{document}
\pagestyle{plain}

\title{AutoHOOT: Automatic High-Order Optimization for
Tensors}

    \author{\IEEEauthorblockN{Linjian Ma$^{*}$\thanks{$^{*}$Equal contribution.}}
    \IEEEauthorblockA{\textit{Department of Computer Science} \\
    \textit{University of Illinois at Urbana-Champaign}\\
    Urbana, IL \\
    lma16@illinois.edu}
    \and
    \IEEEauthorblockN{Jiayu Ye$^{*}$}
    \IEEEauthorblockA{\textit{Google}\\
    Sunnyvale, CA \\
    yejiayu@google.com}
    \and
    \IEEEauthorblockN{Edgar Solomonik}
    \IEEEauthorblockA{\textit{Department of Computer Science} \\
    \textit{University of Illinois at Urbana-Champaign}\\
    Urbana, IL \\
    solomon2@illinois.edu}
    }

\maketitle

\begin{abstract}
High-order optimization methods, including Newton's method and its variants as well as alternating minimization methods, dominate the optimization algorithms for tensor decompositions and tensor networks. These tensor methods are used for data analysis and simulation of quantum systems.
In this work, we introduce AutoHOOT, the first automatic differentiation (AD) framework targeting at high-order optimization for tensor computations. AutoHOOT takes input tensor computation expressions and generates optimized derivative expressions.
In particular, AutoHOOT contains a new explicit Jacobian / Hessian expression generation kernel whose outputs maintain the input tensors' granularity and are easy to optimize. The expressions are then optimized by both the traditional compiler optimization techniques and specific tensor algebra transformations. Experimental results show that AutoHOOT achieves competitive CPU and GPU performance for both tensor decomposition and tensor network applications compared to existing AD software and other tensor computation libraries with manually written kernels. The tensor methods generated by AutoHOOT are also well-parallelizable, and we demonstrate good scalability on a distributed memory supercomputer.
\end{abstract}

\begin{IEEEkeywords}
automatic differentiation, computational graph optimization, tensor computation, tensor decomposition, tensor network
\end{IEEEkeywords}

\section{Introduction}
\label{sec:intro}

Tensors, represented as multidimensional arrays in the computer program, are important in both scientific computing and machine learning. Tensor decomposition~\cite{kolda2009tensor} is a powerful tool in compressing and approximating the high dimensional data, and is used widely in numerical PDEs~\cite{pazner2018approximate}, quantum chemistry~\cite{hohenstein2012tensor,hummel2017low} and statistical modeling~\cite{anandkumar2014tensor,sidiropoulos2017tensor}.
Tensor networks are also widely used in  physics to approximate quantum states~\cite{markov2008simulating,orus2014practical} and in neural networks to form tensorized neural architectures~\cite{novikov2015tensorizing}.
Convolution, which is a basic tensor operation, is widely used in computer vision applications~\cite{krizhevsky2012imagenet}.
Tensors are also widely used in methods for electronic structure calculations in computational chemistry~\cite{hirata2003tensor}.

Derivatives, mostly in the form of gradients, are ubiquitous in the optimization algorithms for tensor related problems. For neural networks, they are used to calculate the gradients of the loss function w.r.t. the model parameters. For tensor decomposition and tensor networks, first-order and higher-order derivatives are necessary to construct the operators used in the alternating optimization. 
Gradients of computational chemistry methods are used for optimization of the electronic geometry to identify stable states and state transitions~\cite{jorgensen2012geometrical}.
Automatic differentiation (AD) frameworks, including popular Python tools such as PyTorch~\cite{paszke2019pytorch}, JAX~\cite{jax2018github}, and TensorFlow~\cite{abadi2016tensorflow}, can generate derivatives in all of these contexts.
However, in tensor decomposition, tensor networks, and quantum chemistry, gradient calculations are most often done via manually written codes, as careful numerical and performance considerations are required in these more complex settings.

AD transforms a software or mathematical expression of a function into code for computation of its derivatives with respect to the desired parameters.
Although mathematically correct, the output programs for the derivatives may be sub-optimal in computational cost, use of efficient kernels such as the BLAS, memory footprint, and numerical stability.
Components of different frameworks address these problems jointly or independently. For example, transformations of the computational graph and operator fusion are used to improve computational efficiency and parallelizability~\cite{abadi2016tensorflow,jia2019taso,paszke2019pytorch}.
Gradient checkpointing and garbage collection are used to address memory bottlenecks~\cite{abadi2016tensorflow,paszke2019pytorch}.
For large scale tensor computations, computational and memory demands leave little leeway for error in these aspects.

Common commercial AD frameworks such as PyTorch~\cite{paszke2019pytorch}, JAX~\cite{jax2018github}, and TensorFlow~\cite{abadi2016tensorflow} are focused on  first-order numerical optimization methods on deep learning models. 
In the context of tensor decompositions, tensor network optimization, and differentiation of tensor methods, three major additional challenges arise.
\begin{enumerate}[topsep=0pt,leftmargin=*]
    \item These domains predominantly employ alternating second-order optimization methods, as they provide monotonic convergence and rapid progress at almost the same per-iteration cost as first-order methods. These methods employ implicit representations of the Jacobian and Hessian to solve linear systems. Existing AD frameworks have limited logical constructs for second-order derivative information, and consequently generate code that can be sub-optimal in cost by orders of magnitude.
    \item Most tensor operations involved in the deep learning applications are related to small tensors, while in tensor network and tensor decomposition applications, there are many tensor contractions over high order (multidimensional) tensors with a large number of elements. Therefore, tensor network applications require better optimization algorithms to select optimized contraction order and eliminate redundant calculations.
    \item Deep learning computational graphs usually have large depth with many nonlinear operations, making the freedom to optimize tensor operations limited. On the other hand, in tensor decomposition and tensor network applications, the computational graphs are usually wide and have small depth, so there is more freedom to optimize the computation. 
\end{enumerate}
Although many frameworks, such as Tensorly~\cite{kossaifi2019tensorly}, TensorNetwork~\cite{roberts2019tensornetwork} and Quimb~\cite{gray2018quimb}, provide interfaces to optimize the tensor decomposition / networks algorithms with AD frameworks such as TensorFlow and PyTorch, the optimization algorithms are the general first-order methods and its variants. These frameworks explicitly implement popular second-order methods for these problems, such as Alternating Least Squares (ALS) for tensor decompositions and Density Matrix Renormalization Group (DMRG) for 1D tensor networks, rather than using AD.
 The ability to generate efficient expressions of these methods automatically via AD, would accelerate the development of new variants and their deployment on shared-memory, GPU, and distributed-memory architectures.

In this paper, we propose a new AD framework for tensor computations, Automatic High-Order Optimization for Tensors (AutoHOOT). The library is publicly available at \url{https://github.com/LinjianMa/AutoHOOT}. AutoHOOT encapsulates the following novel ideas and capabilities:
\begin{itemize}[topsep=0pt,leftmargin=*]
    \item
    a new AD module that generates more efficient representations for higher-order derivative constructs such as Jacobians and Hessians, which are needed for tensor computation applications,
    \item a new computational graph optimization module that extends beyond the traditional optimization techniques for compilers with tensor-algebra specific transformations, such as distributivity of matrix inversion over the Kronecker product,
    \item portability via high-level support for different tensor contraction backends: NumPy for multi-core CPU, TensorFlow for GPUs, and Cyclops~\cite{solomonik2014massively} for distributed memory systems,
    \item substantial improvements in sequential and parallel performance for tensor network and tensor decomposition optimizations over other AD libraries and competitive or improved performance w.r.t. manually-optimized implementations.
\end{itemize}

\section{Background}
\label{sec:background}
\subsection{Notations and Definitions}

For vectors, bold lowercase Roman letters are used, e.g., $\vcr{x}$. For matrices, bold uppercase Roman letters are used, e.g., $\mat{X}$. For tensors, bold calligraphic uppercase Roman letters are used, e.g., $\tsr{X}$. An order $N$ tensor corresponds to an $N$-dimensional array with dimensions $s_1\times \cdots \times s_N$. 

Elements of vectors, matrices, and tensors are denoted in parentheses, e.g., $\vcr{x}(i)$ denotes the $i$th entry of a vector $\vcr{x}$, $\mat{X}(i,j)$ denotes the $(i,j)$th element of a matrix $\mat{X}$, and $\tsr{X}(i,j,k,l)$ denotes the $(i,j,k,l)$th element of an order 4 tensor $\tsr{X}$. Subscripts are used to label different vectors, matrices,  tensors and functions (e.g. $\tsr{X}_1$ and $\tsr{X}_2$, $f_1$ and $f_2$). 

Matricization is the process of unfolding a tensor into a matrix. Given a tensor $\tsr{X}$ the mode-$n$ matricized version is denoted by $\mat{X}_{(n)}\in \R^{s_n\times K}$ where $K=\prod_{m=1,m\neq n}^N s_m$. We generalize this matricization definition, so that $\mat{X}_{(i:j)}$ means that the dimensions from the $i$th index to the $j$th index are unfolded to the column dimension of the matrix, and all the other dimensions are unfolded to the row dimension of the matrix.

For a scalar output function $y=f(\vcr{a}_1, \ldots, \vcr{a}_N)$, We use the $\vcr{g}^{[f]}_{[\vcr{a}_i]}$ and $\mat{H}^{[f]}_{[\vcr{a}_i]}$ to denote the gradient vector and Hessian matrix of $f$ w.r.t the input vectors $\vcr{a}_i$. When the inputs are tensors, the gradient and the Hessian will also be a tensor and denote $\tsr{G}^{[f]}_{[\tsr{A}_i]}$ and $\tsr{H}^{[f]}_{[\tsr{A}_i]}$. For a function with non-scalar output $\vcr{y} = f(\vcr{a}_1, \ldots, \vcr{a}_N)$, we use $\mat{J}^{[f]}_{[\vcr{a}_i]}$ to denote the Jacobian matrix of the function $f$ w.r.t one of the input vectors $\vcr{a}_i$. The shape of the Jacobian matrix will be $\R^{|
\vcr{y}|\times|\vcr{a}_i|}$. If $\tsr{Y}$ is an output tensor with size $\R^{s_1\times \ldots \times s_M}$, and $\tsr{A}_i$ is an input tensor with size $\R^{r_1\times \ldots\times r_K}$, then the Jacobian will be a tensor denoted as $\tsr{J}^{[f]}_{[\tsr{A}_i]}$ with dimensions $\R^{s_1\times \ldots\times s_M\times r_1\times\ldots\times r_K}$. 

We also define generalized Vector Jacobian Product (VJP), Jacobian Vector Product (JVP) and Hessian Vector Product (HVP). 
When both Jacobian and Hessian are matrices, these are matrix-vector multiplication operations. 
When Jacobian and Hessian are both tensors defined above, these are tensor contractions, whose results are the same as unfolding the tensors into matrices and performing the matrix-vector product.

\subsection{Numerical Optimization Algorithms for Tensor Computations}
\label{subsec:numericalopt}

We consider two tensor numerical problems: the nonlinear least squares fitting and the eigenvalue problem. 
For both problems, we denote $\tsr{X}$ as the input tensor which can be an explicit tensor or implicit tensor network (e.g., Matrix Product Operator~\cite{verstraete2004matrix}), $f$ as a tensor network function and $\tsr{A}_1, \ldots, \tsr{A}_N$ as the optimization variables. Then the objective for the nonlinear least squares problem is defined as
\begin{equation}
\min_{\tsr{A}_1, \ldots, \tsr{A}_N}\phi(\tsr{A}_1, \ldots, \tsr{A}_N) := \frac{1}{2}\|\tsr{X}- f(\tsr{A}_1, \ldots, \tsr{A}_N)\|^2,
\label{eq:leastsquare}
\end{equation}
which finds a generalized low rank approximation of the input tensor $\tsr{X}$. The objective for the eigenvalue problem is defined as
\begin{equation}
\min_{\tsr{A}_1, \ldots, \tsr{A}_N}\psi(\tsr{A}_1, \ldots, \tsr{A}_N) := 
\frac{
\vcr{v}^T_{(1:N)} \mat{X}_{(1:N)}\vcr{v}_{(1:N)}
}
{
\|\tsr{V}\|_F^2
}
,
\label{eq:eigenvalue}
\end{equation}
where $\tsr{V} = f(\tsr{A}_1, \ldots, \tsr{A}_N)$
and the output of $f$ serves as a generalized low rank approximation of the eigenvector of a Hermitian matrix that is a matricization of $\tsr{X}$.

Three categories of algorithms are generally used to optimize the problems: second-order methods, including Newton's method and its variants, alternating minimization, which updates each input / site at one time, and first-order methods such as gradient descent and its variants. 

\textbf{Newton's method and its variants.}
Newton's method and its variants, such as Gauss-Newton (GN) method, are popular methods to solve non-linear least squares problems for a quadratic objective function defined in Equation~\ref{eq:leastsquare}. Let $\vcr{a}$ denote the concatenation of all the vectorized sites $\text{vec}(\tsr{A}_i)$
and $\hat{f}(\vcr a)=\text{vec}(f(\tsr{A}_1,\ldots,\tsr{A}_N))$, so that
$
r(\vcr{a}) := \text{vec}(\tsr{X})- \hat{f}(\vcr{a})
$
denotes the vectorized residual.
Further, let $r_i(\vcr{a})$ denote the $i$th element of the output of function $r$.
The gradient and the Hessian matrix of $\phi$ can be expressed as
\begin{align*}
\nabla \phi (\vcr{a}) = & \mat{J}^{[r]T}_{[\vcr{a}]}r(\vcr{a}),  \\
\mat{H}^{[\phi]}_{[\vcr{a}]} = \mat{J}^{[r]T}_{[\vcr{a}]}\mat{J}^{[r]}_{[\vcr{a}]}& + \sum_{i}r_i(\vcr{a})\mat{H}^{[r_i]}_{[\vcr{a}]}.
\end{align*}
The Newton iteration performs the update based on
\[\vcr{a}^{(k+1)} = \vcr{a}^{(k)}- (\mat{H}^{[\phi]}_{[\vcr{a}^{(k)}]})^{-1} \mat{J}_{[\vcr{a}^{(k)}]}^{[r]T}r(\vcr{a}^{(k)}),\]
while the Gauss-Newton method leverages the fact that $\mat{H}^{[r_i]}_{[\vcr{a}]}$ is negligible as its norm is small when the residual is small, therefore the update can be performed as
\[\vcr{a}^{(k+1)} = \vcr{a}^{(k)}- (\mat{J}^{[r]T}_{[\vcr{a}^{(k)}]}\mat{J}^{[r]}_{[\vcr{a}^{(k)}]})^{-1} \mat{J}_{[\vcr{a}^{(k)}]}^{[r]T}r(\vcr{a}^{(k)}),\]
where $\vcr{a}^{(k)}$ represents the $\vcr{a}$ at $k$th iteration.
The Gauss-Newton updates can be regarded as normal equations for the linear least squares problem. Both Newton and Gauss-Newton methods can be solved via the conjugate gradient method
with matrix-vector products performed with an implicit form of the Jacobian / Hessian
to avoid costly matrix inversion~\cite{tichavsky2013further,singh2019comparison}.

\textbf{Alternating minimization.} For tensor numerical optimization, in many cases both the input and output dimensions are large, and it's computationally expensive to form the explicit Hessian / Jacobian matrix w.r.t. all the variables and perform the second-order method directly. 
On the other hand, when optimizing a subset of variables, forming the Hessian or Jacobian with respect to those variables is affordable and effective.
Most often, alternating minimization procedures update one tensor operand at a time.
For Equation~\ref{eq:leastsquare}, such subproblem can be formulated as
\begin{equation}
  \min_{\tsr{A}_i}\phi(\tsr{A}_1, \ldots, \tsr{A}_N).
  \label{eq:als}
\end{equation}
Each $\tsr{A}_i$ for $i\in \{1,\ldots, N\}$ is updated once via its subproblem during an optimization sweep. For tensor decompositions and tensor networks, each subproblem is often quadratic, allowing for the minima to be found directly, often at a similar cost to updating $\tsr{A}_i$ with a first-order method.
Alternating minimization also generally provides monotonic convergence.

In each sweep, many terms necessary to form the subproblems have many equivalent intermediates, and choosing the proper contraction paths to form and also amortize them can greatly save the cost. 
This scheme, called
the \textit{dimension tree} algorithm, is critical to the algorithm performance, and has been used in both tensor decompositions~\cite{phan2013fast,vannieuwenhoven2015computing,ma2018accelerating} and DMRG to save the cost.

\textbf{First-order methods.} 
The efficacy of the first-order methods on tensor computations is dependent on the applications.
The first-order methods are shown to be advantageous on achieving high fitting accuracies on some tensor decomposition problems~\cite{acar2011scalable}, while they also perform worse 
than alternating minimization in achieving high accuracy for large scale tensor completion problems~\cite{zhang2019enabling}. 
The per-iteration cost of first-order methods is often comparable to that of both second-order methods and the alternating minimization method, due to the structure of tensor networks $f$ in Equations~\ref{eq:leastsquare}, \ref{eq:eigenvalue}.

Traditional AD frameworks can generate efficient kernels for first-order methods, while their performance on the kernels in higher-order methods is sub-optimal. 
In this paper, we focus on the performance optimization over both second-order method and alternating minimization methods, to accelerate future development of efficient high-order methods for various applications.
However, we believe our graph optimization techniques also have the potential to produce efficient formulations for first-order methods, where the objective involves the contractions of high-order tensors, which arise in quantum chemistry methods~\cite{hirata2003tensor}.

\subsection{Previous Work}
Optimization for tensor computations requires three essential building blocks, automatic differentiation, optimization of the generated set of tensor operations, and a computational backend for individual tensor operations.
Existing software for tensor computations, including Tensorly~\cite{kossaifi2019tensorly}, TensorNetwork~\cite{roberts2019tensornetwork} and Quimb~\cite{gray2018quimb}, permit the use of multiple backends for individual tensor operations, and provide some constructs to make use of AD. However, when using AD, these libraries employ general AD backends such as JAX or TensorFlow in a black-box fashion.

Automatic differentiation is generally provided via one of two ways, operator overloading~\cite{paszke2019pytorch,jax2018github,tokui2019chainer,walter2013algorithmic,maclaurin2015autograd} or source code transformation (SCT)~\cite{abadi2016tensorflow,van2018tangent,DBLP:journals/corr/JiaSDKLGGD14}. Operator overloading requires the user to write functions in terms of the provided library constructs and constructs the derivatives at run-time,
while SCT uses precompilation to generate code for derivative computation.
Operator overloading provides a similar mental programming model as normal computer programs~\cite{tokui2019chainer}, yielding code that is easier to interpret and debug than SCT.
On the flip side, SCT has more potential to optimize the computational graph with global graph information. Consequently, SCT is generally the method of choice for AD libraries that aim to achieve high performance (e.g.,~\cite{abadi2016tensorflow}).

Our work on graph optimization builds on substantial efforts for optimization of computational graphs of tensor operations.
Tensor contraction can be optimized via parallelization~\cite{rajbhandari2014communication,kendall2000high,Kats_sp_tensor2013,solomonik2014massively}, efficient transposition~\cite{hptt2017}, blocking~\cite{choi2018blocking,li2018hicoo,ibrahim2014analysis,senanayake2019unified}, exploiting symmetry~\cite{hirata2003tensor,solomonik2014massively,solomonikfast}, and sparsity~\cite{kjolstad2017tensor,peng2016massively,Kats_sp_tensor2013,manzer2017general,peng2016massively,10.1145/2833179.2833183}.
For complicated tensor graphs, specialized compilers like XLA~\cite{xla2017xla} and TVM~\cite{TVM2018} rewrite the computational graph to optimize program execution and memory allocation on dedicated hardware. For machine independent optimization, Grappler in TensorFlow~\cite{abadi2016tensorflow} and TASO~\cite{jia2019taso} use rule based symbolic substitution to simplify the execution flow.
Classical compiler optimization also includes relevant techniques such as common subexpression elimination \cite{aho1986compilers} are widely used as well \cite{abadi2016tensorflow,hartono2006identifying}. 
Previous work, such as Opt\_einsum~\cite{smith2018opt} has yielded approaches for automatically determining efficient contraction orderings and selecting the best intermediates~\cite{hirata2003tensor,auer2006automatic,hartono2009performance,hartono2005automated}.
The approaches generally rely on heuristic or exhaustive search to select a contraction path, as finding the optimal contraction order is NP-hard \cite{chi1997optimizing}.

\section{Overall Architecture}
\label{sec:arch}

\begin{figure}[]
\centering
\includegraphics[width=1.3in]{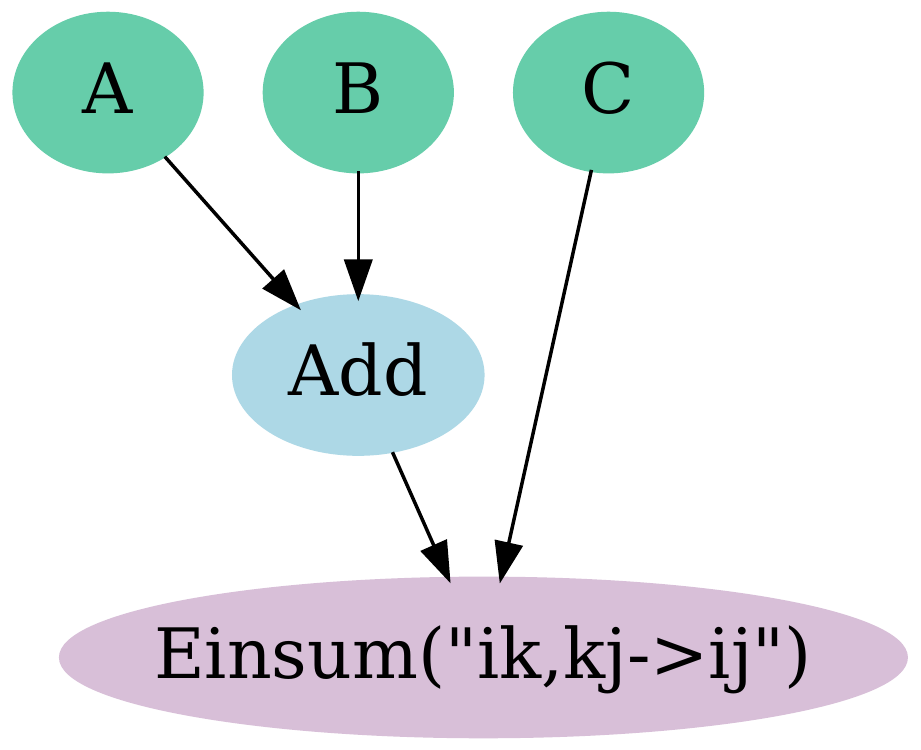}
\caption{An example of a computational graph. We use green nodes to denote input variables, purple nodes to denote output nodes, and blue nodes to denote intermediate or constant nodes.}
\label{fig:graph_example}
\end{figure}

\begin{figure}[]
\centering
\includegraphics[width=.48\textwidth]{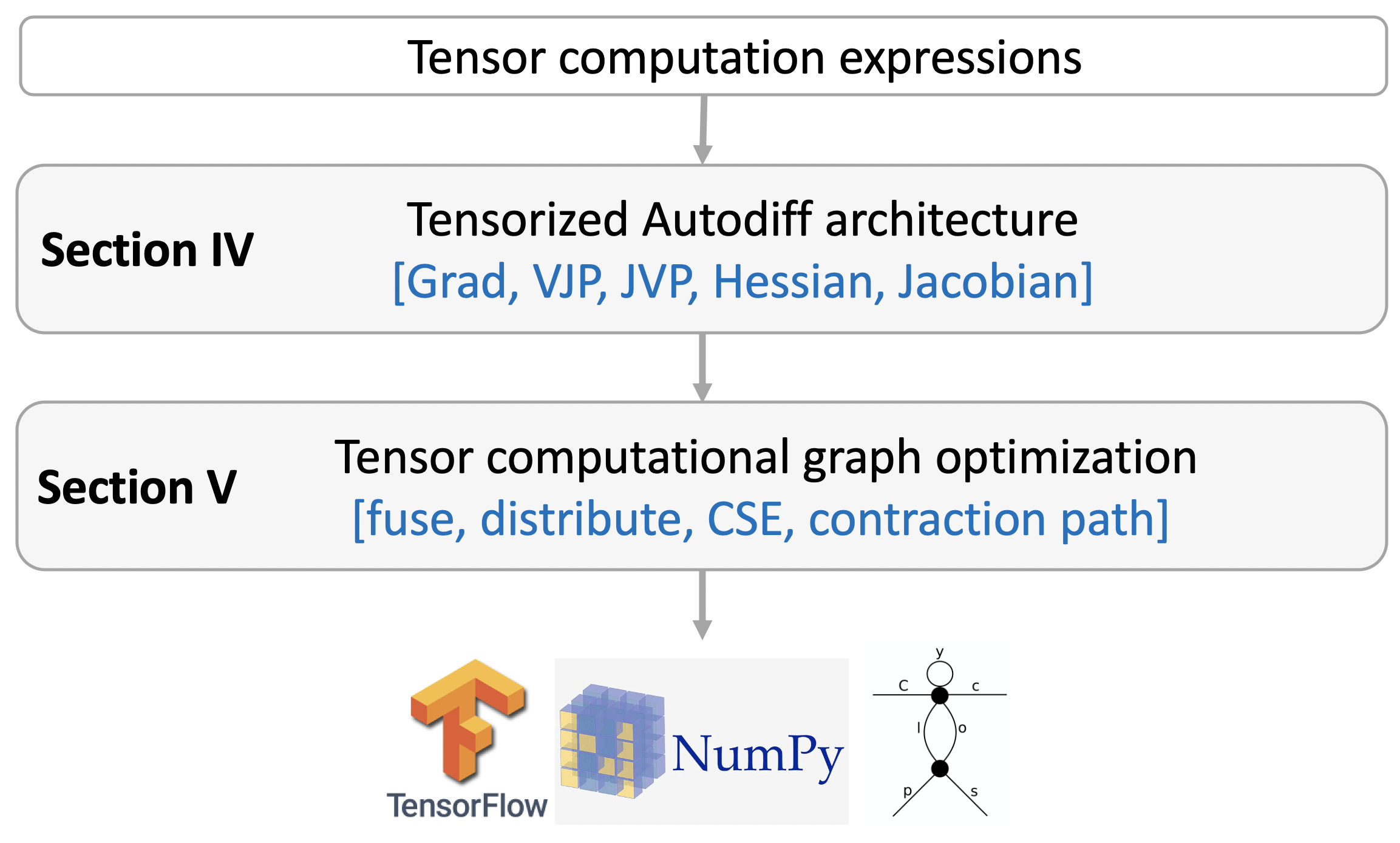}
\caption{System overview of AutoHOOT. The arrows show the computation flow.}
\label{fig:overview}
\end{figure}

The computations in AutoHOOT
are described by \textit{computational graphs}, which are directed graphs revealing the data dependency between different operations. Each \textit{node} can be a source, intermediate or sink. Source / Sink nodes are inputs / outputs of the graph. Sink and intermediate nodes can be any mathematical computation, while input nodes are fed by the user or constants.  An \textit{edge} connecting two nodes represents the data dependency between them. An example of a computational graph is shown in Figure~\ref{fig:graph_example}, where $\mat{A}, \mat{B}, \mat{C}$ are source nodes, the Einsum node is the sink, and the graph computes $(\mat{A}+\mat{B})\mat{C}$. We typically refer a node with its type, e.g., an Einsum node, which represents the tensor computations based on the Einstein summation convention.
An \textit{Einsum graph} is defined as a graph of nodes where all the nodes except the sources are Einsum nodes. 
An \textit{Einsum tree} is defined as a tree of nodes where all the nodes except the sources are Einsum nodes.

AutoHOOT has two major components: an automatic differentiation architecture for tensor computations and a tensor computational graph optimizer. Figure~\ref{fig:overview} shows the system overview. For an input computation expression, the AD module will generate its tensorized differentiation expressions. Both the input expressions and the differentiation expressions will be optimized through the graph optimization module. 
With the optimized expressions, users have the choice to directly run the optimized expressions using the framework backends, including NumPy, TensorFlow and Cyclops, or to generate the Python source code through the source generation module.

Below we show an example to perform the CP decomposition based on alternating least squares using the framework. Rather than constructing each subproblem and building the dimension tree based algorithm manually, we only need to construct the updates of Newton's method for each subproblem, and the \textit{optimize} function will reorganize the computational graph to minimize execution time automatically. 
\begin{python}
# construct input expressions
A, B, C, input_tensor, loss = cpd_graph(size, rank)

def update_site(site):
    hes = ad.hessian(loss, [site])
    grad, = ad.gradients(loss, [site])
    new_site = ad.tensordot(
        ad.tensorinv(hes[0][0]), grad)
    # return the optimized computational graph
    return optimize(new_site)
new_A = update_site(A)
new_B = update_site(B)
new_C = update_site(C)

# This executor is shared among all updates. 
executor = ad.Executor([loss, new_A, new_B, new_C])
# ALS iterations
for i in range(num_iter):
    A_val = executor.run(feed_dict={
        input_tensor: input_tensor_val,
        A: A_val, B: B_val, C: C_val
    }, out=[new_A])
    B_val = executor.run(feed_dict={
        input_tensor: input_tensor_val,
        A: A_val, B: B_val, C: C_val
    }, out=[new_B])
    C_val = executor.run(feed_dict={
        input_tensor: input_tensor_val,
        A: A_val, B: B_val, C: C_val
    }, out=[new_C])
    loss_val = executor.run(feed_dict={
        input_tensor: input_tensor_val,
        A: A_val, B: B_val, C: C_val
    }, out=[loss])
\end{python}

In the AD module, we implement the reverse mode AD for first-order derivatives (Jacobian, VJP and JVP), as well as for higher-order derivatives, including Hessian and HVP. Both Jacobian and Hessian are formulated with a new algorithm, such that their calculations are not dependent on the JVP and HVP routines, which is more amenable to
 parallel execution as well as graph optimizations. We describe this approach in detail in Section~\ref{sec:diff}.

The graph optimizer provides optimizations for tensor computational graphs.
We adopt many machine independent optimization algorithms for common tensor computational graphs, such as selection of optimal contraction path and common sub-expression elimination. For second-order methods, the graph optimizer rewrites the structured inverse, such as the inverse of a Kronecker product, so that the inverses are operated on smaller tensors. For alternating methods, we developed a path selection algorithm with constraints to construct the dimension trees. We describe this algorithm
in detail in Section~\ref{sec:opt}.

\section{Computational Graphs for High-Order Derivatives}
\label{sec:diff}

We implement the reverse-mode AD based on the source code transformation (SCT) method, explicitly transforming the primal computation expression prior to execution to the adjoint expression. It allows us to flexibly perform the computational graph optimization after the adjoint expression production.

Our AD module supports the operations which calculate the Jacobian / Hessian expressions implicitly (VJP, JVP and HVP), and also explicit Jacobian and Hessian calculations. The implicit calculations are widely used in many other frameworks, because it is computationally cheaper. For example, for a Hessian matrix with size $n \times n$, explicitly forming the matrix costs $O(n^2)$, while the HVP calculation will only cost $O(n)$ leveraging the back-propagation gradient functions. For the explicit Jacobian and Hessian calculations, we introduce a new back-propagation algorithm that can produce a computational graph is more amenable to parallelization and downstream optimizations. The algorithm is detailed in Section~\ref{subsec:explicithessian}.

\subsection{VJP, JVP, and HVP}

Our implementation of VJP is similar to many other frameworks~\cite{paszke2019pytorch,jax2018github,abadi2016tensorflow}, and is based on the reverse-mode AD. For functions involving matrix / vector operations whose inputs and outputs are both vectors,
\[
\vcr{x}_{i+1} = f_i(\vcr{x}_i), i\in [1,\ldots, N],
\]
consider a computational graph consisting of a chain of these functions,
\[
\vcr{y} = f(\vcr{x}_1) = f_N\cdots f_1(\vcr{x}_1),
\]
the VJP adjoint of $\vcr{x}_i$, $\vcr{v}^T\mat{J}^{[f]}_{[\vcr{x}_i]}$, is calculated based on the VJP adjoint of $\vcr{x}_{i+1}$,
{\small
\[
\text{VJP}(\vcr{v}, f, \vcr{x}_i) {=} \vcr{v}^T\mat{J}^{[f]}_{[\vcr{x}_i]} {=} (\vcr{v}^T\mat{J}^{[f]}_{[\vcr{x}_{i+1}]})\mat{J}^{[f_{i}]}_{[\vcr{x}_{i}]} {=} \text{VJP}(\vcr{v}, f,  \vcr{x}_{i+1})\mat{J}^{[f_{i}]}_{[\vcr{x}_{i}]}.
\]}
Therefore, the VJP of all the inputs / intermediates $\vcr{x}_i$, $i\in [1,\ldots, N]$ will be calculated with one backward propagation. It is also computationally efficient, because only matrix-vector product is necessary for each calculation.

Note that for the cases where sub function inputs and outputs contain matrices or tensors, VJP with reverse-mode AD is still valid and efficient, since we can think of each  matrix or tensor as a reshaped vector. For the case where the output is a scalar, the gradient expression is implemented based on the VJP, if we fix the vector as a unit length vector with element being one.

 Our JVP implementation is based on the VJP function\footnote{The JVP implementation is based on the technique introduced at \url{https://j-towns.github.io/2017/06/12/A-new-trick.html}.}.
 Although it's more computationally efficient to implement JVP based on forward mode AD~\cite{baydin2017automatic}, we choose to implement it based on our reverse mode AD module, and optimize the computational graph afterwards to achieve computationally efficient expressions. 
 The JVP implementation is based on calling the VJP function twice. First, we construct a function $g$, whose expression is as follows,
\[
    g(\vcr{u}) = \text{VJP}(\vcr{u}, f, \vcr{x})^T =  (\vcr{u}^T \mat{J}^{[f]}_{[\vcr{x}]})^T.
\]
Afterwards, we perform another VJP operation on the function $g$ with related to its input $\vcr{u}$, and can get the JVP expression,
\small
\begin{align*}
    \text{VJP}(\vcr{v}, g, \vcr{u})^T = 
    (\vcr{v}^T \mat{J}^{[g]}_{[\vcr{u}]} )^T 
    = (\vcr{v}^T \mat{J}^{[f]T}_{[\vcr{x}]})^T = \mat{J}^{[f]}_{[\vcr{x}]}\vcr{v} 
    = \text{JVP}(\vcr{v}, f, \vcr{x}) .
\end{align*}

\normalsize
We also implement the HVP function based on the gradient function. We only consider the case when the function output is a scalar, because it is the general case where Hessian matrices are used. 
The HVP is formulated based on two gradient calculations, because HVP is equivalent to the gradient of the gradient-vector inner product. The expression is shown as follows,
\begin{align*}
\text{HVP}(\vcr{v}, f, \vcr{x}) = 
 \mat{H}^{[f]}_{[\vcr{x}]}\vcr{v} = 
 \frac{\partial{\vcr{g}^{[f]}_{[\vcr{x}]}}}{\partial{\vcr{x}}} \vcr{v} =
\frac{\partial{\vcr{g}^{[f]}_{[\vcr{x}]}}}{\partial{\vcr{x}}}\vcr{v} + \vcr{g}^{[f]T}_{[\vcr{x}]}\frac{\partial{\vcr{v}}}{\partial{\vcr{x}}} \\
=  \frac{\partial{(\vcr{g}^{[f]T}_{[\vcr{x}]}\vcr{v})}}{\partial{\vcr{x}}}
= \text{grad}(\text{grad}(f, \vcr{x})^T \vcr{v}, \vcr{x} ).
\end{align*}

\normalsize
\subsection{Explicit Jacobian and Hessian}
\label{subsec:explicithessian}

To the best of our knowledge, all of the popular AD frameworks calculate explicit Jacobian and Hessian based on the VJP and HVP routines~\cite{abadi2016tensorflow,jax2018github,paszke2019pytorch}. Taking the Jacobian calculation of 
\begin{align*}
    f(\vcr{x})= \mat{A}_1\mat{A}_2\vcr{x}
\end{align*}
as an example: when both $\vcr{x}$ and $f(\vcr{x})$ are of size $n$, current methods will compute the $i$th row of the Jacobian via VJP $\vcr{e}_i^T\mat{J}^{[f]}_{[\vcr{x}]}$ for $i\in \{1,\ldots,n\}$, where $\vcr{e}_i$ is the $i$th elementary vector.
There are two major disadvantages to this approach:
\begin{itemize}[topsep=0pt,leftmargin=*]
    \item It changes the BLAS-3 level matrix-matrix multiplications to multiple BLAS-2 level matrix-vector multiplications, and less flop intensity can be achieved. Although many frameworks provide the routine to compute all the matrix-vector multiplications in parallel, the parallelism is still sub-optimal and less efficient than the matrix multiplications, because the flop-to-byte ratio is $O(1)$ versus $O(n)$.
    \item The computational graph produced is difficult to optimize. Although having high dimensions, many Jacobians / Hessians in tensor computation operations are highly structured and the computational cost can be greatly reduced if being well optimized. However, calculating them based on matrix-vector products adds one more matrix-vector product operation, which usually break the structure and increase the cost. 
    
    For example, if $\mat{A}_1 = \mat{B}\otimes \mat{C}$ and $\mat{A}_2 = \mat{D} \otimes \mat{E}$ and $\mat{B}, \mat{C}, \mat{D}, \mat{E}$ have sizes $n\times n$, performing matrix-vector product for the Jacobian and each elementary vector costs $O(n^4)$ and the overall Jacobian calculation cost is $O(n^6)$. However, if we calculate the Jacobian directly, we can use the mixed-product property of the Kronecker product to optimize the expression,
    \[
    (\mat{B}\otimes \mat{C})(\mat{D}\otimes \mat{E}) = (\mat{BD}) \otimes (\mat{CE}),
    \]
    reducing the overall cost to $O(n^4)$.
\end{itemize}
To alleviate these disadvantages, we produce both Jacobian and Hessian expressions in a way that's independent of VJP and HVP routines. 

For the Jacobian expression, our implementations are also based on the chain rule to perform back propagation, using
\[
\text{Jacobian}(f, \vcr{x}_i) {=} \mat{J}^{[f]}_{[\vcr{x}_i]} {=} \mat{J}^{[f]}_{[\vcr{x}_{i+1}]}\mat{J}^{[f_i]}_{[\vcr{x}_{i}]} {=} \text{Jacobian}(f, \vcr{x}_{i+1})\mat{J}^{[f_i]}_{[\vcr{x}_{i}]}.
\]
Therefore, the Jacobian of one target node is the matrix-matrix product between the Jacobian of its output node and the Jacobian of the local function. Note that when both $\vcr{x}_i$ and the Jacobian have the tensor format, the above equation still holds, except that the matrix-matrix product is expressed in the form of tensor contractions (Einsums).

For linear operations, such as addition, subtraction, scalar-tensor multiplication and Einsum, we formulate the Jacobian expressions as an Einsum. To achieve that, we introduce the \textit{Identity node}, which is a node that applies an identity matrix, to express the constraints in Jacobian tensors. For example, for the addition operations of two order $N$ tensors,
\[
    f(\tsr{A}, \tsr{B}) = \tsr{A} + \tsr{B}, 
\]
its Jacobian is a tensor of order $2N$,
where $\tsr{J}^{[f]}_{[\tsr{A}]}(x_1, \ldots, x_{2N}) = 1$ if and only if $x_i=x_{i+N}$ for $i\in\{1,\ldots, N\}$, and other elements are 0. This constraint can be easily specified with identity nodes. For the order 3 addition, the Jacobian of $\tsr{A}$ can be expressed as 
    \[
    \tsr{J}^{[f]}_{[\tsr{A}]}(i,j,k,l,m,n)
    = \mat{I}(i,l)\mat{I}(j,m)\mat{I}(k,n).
    \]
Similarly, we can use the method to express the Jacobians for all the other linear operations. For example, for an Einsum expression below, its Jacobians are written as
    \[
    f(\tsr{A}, \tsr{B})(i,j,k)= \sum_{l}\tsr{A}(i,k,l)\tsr{B}(j,k,l),
    \]
    \[
    \tsr{J}^{[f]}_{[\tsr{A}]}(i,j,k,m,n,o)=
    \mat{I}(i,m)\mat{I}(k,n)\tsr{B}(j,n,o),
    \]
    \[
    \tsr{J}^{[f]}_{[\tsr{B}]}(i,j,k,m,n,o)=
    \mat{I}(j,m)\mat{I}(k,n)\tsr{A}(i,n,o).
    \]
Although we have introduced several identity nodes, they can be easily pruned so that only necessary identity nodes are left, which will be introduced in Section~\ref{sec:opt}. The Hessian routines are based on the Jacobian routines: we perform Jacobian calculations twice to get the Hessian expressions. The advantage of this Jacobian / Hessian generation method is three-fold:
first, we can leverage BLAS-3 level operations to perform most of the tensor contractions and can achieve higher performance. Second, the expressions are much easier to optimize, as will be introduced below. Third, the source code for Jacobian / Hessian expressions can be easily acquired, which is beneficial for both debugging and research purposes.

\section{Graph Optimizations}
\label{sec:opt}
\begin{figure*}
\centering
\subfloat[Einsum distribution.]{
\includegraphics[width=.495\textwidth]{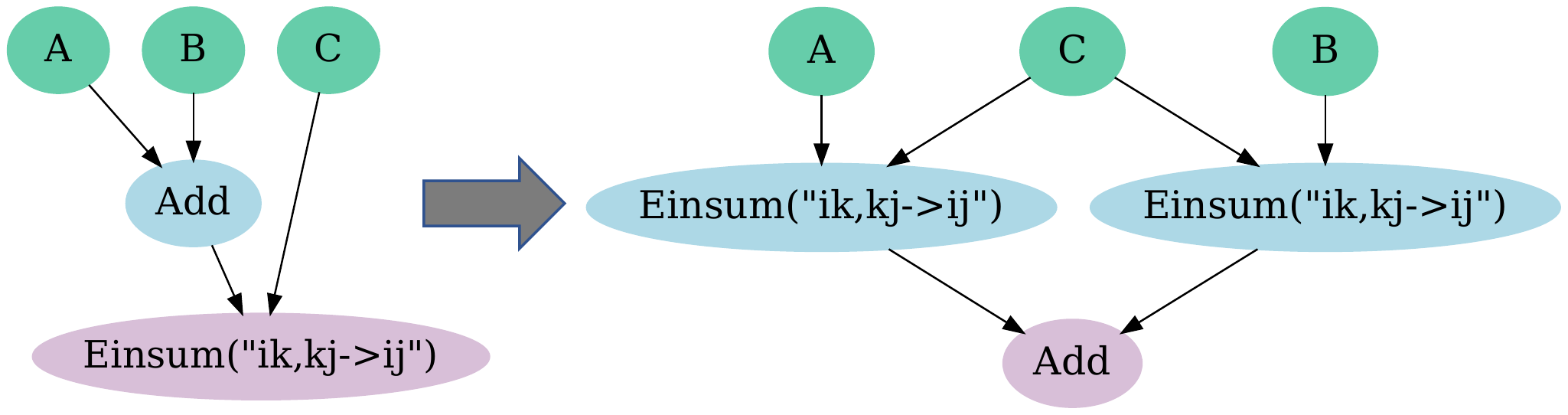}
\label{subfig:distribute}
}
\subfloat[Identity node pruning. The nodes whose name starts from "I" are identity nodes.]{
\includegraphics[width=.495\textwidth]{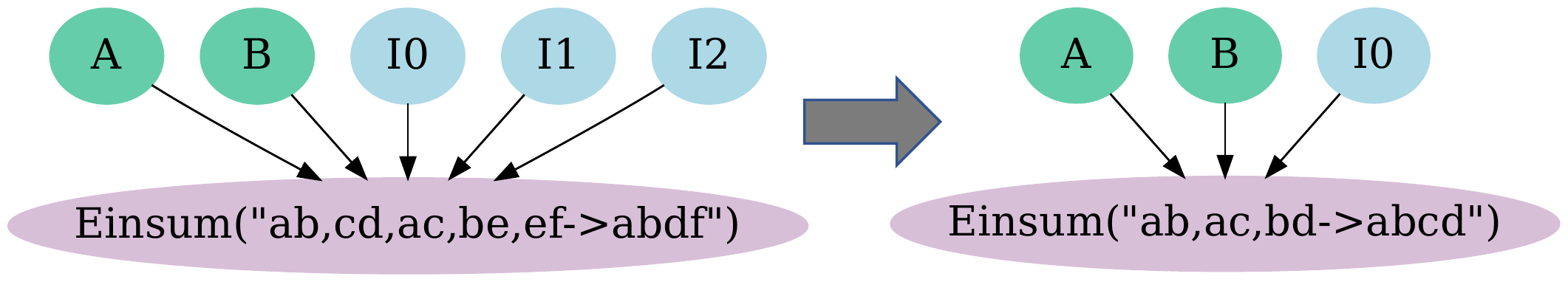}
\label{subfig:identity}
}

\subfloat[Einsum fusion. It transforms an Einsum graph into one single Einsum node.]{
\includegraphics[width=.99\textwidth]{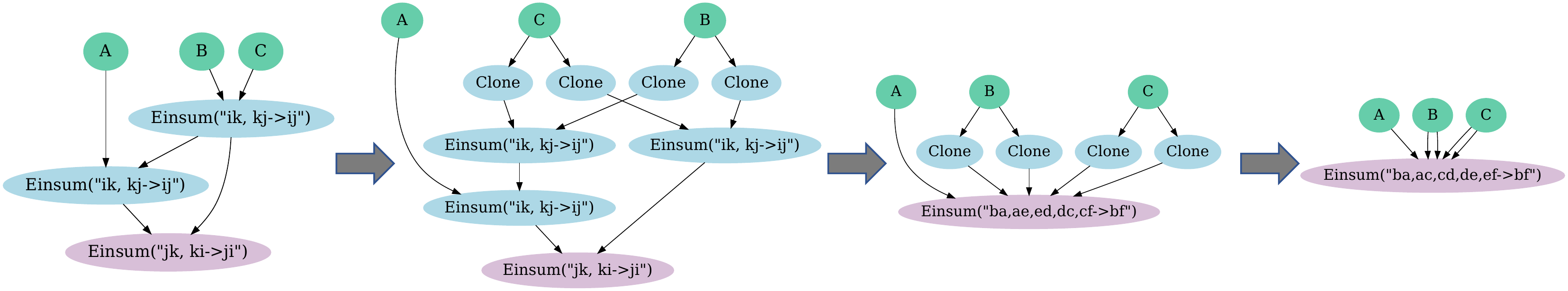}
\label{subfig:fusion}
}

\subfloat[Optimization of tensor inversion]{
\includegraphics[width=.68\textwidth]{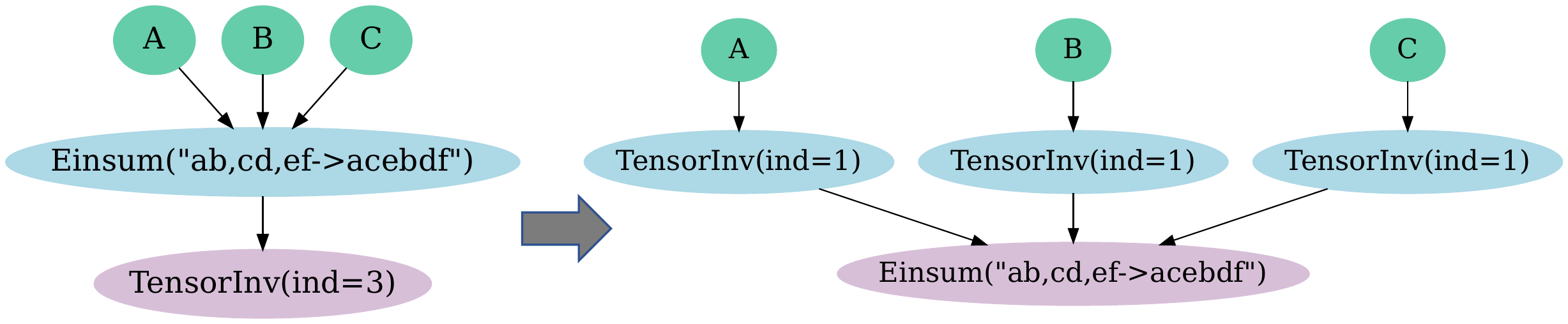}
\label{subfig:optinverse}
}
\subfloat[Inverse node pruning]{
\includegraphics[width=.3\textwidth]{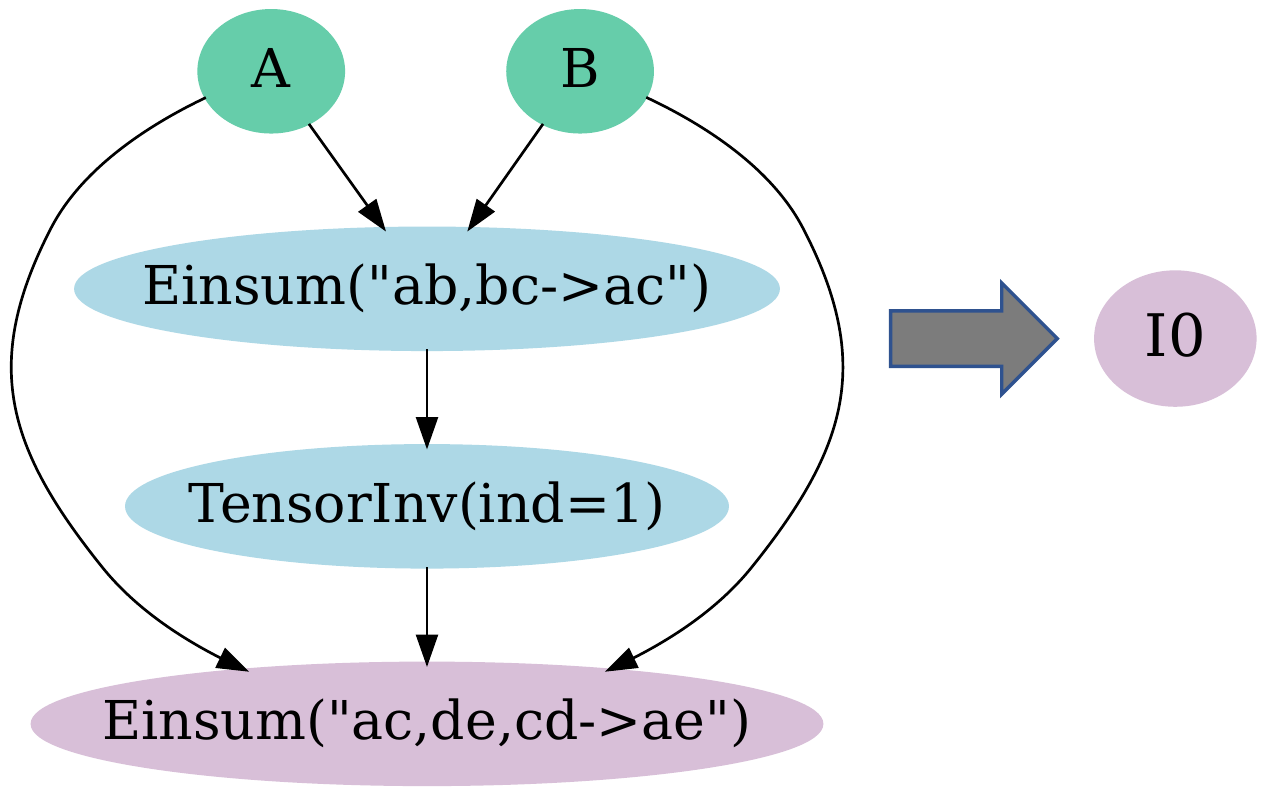}
\label{subfig:pruneinverse}
}

\subfloat[Common subexpression elimination]{
\includegraphics[width=.495\textwidth]{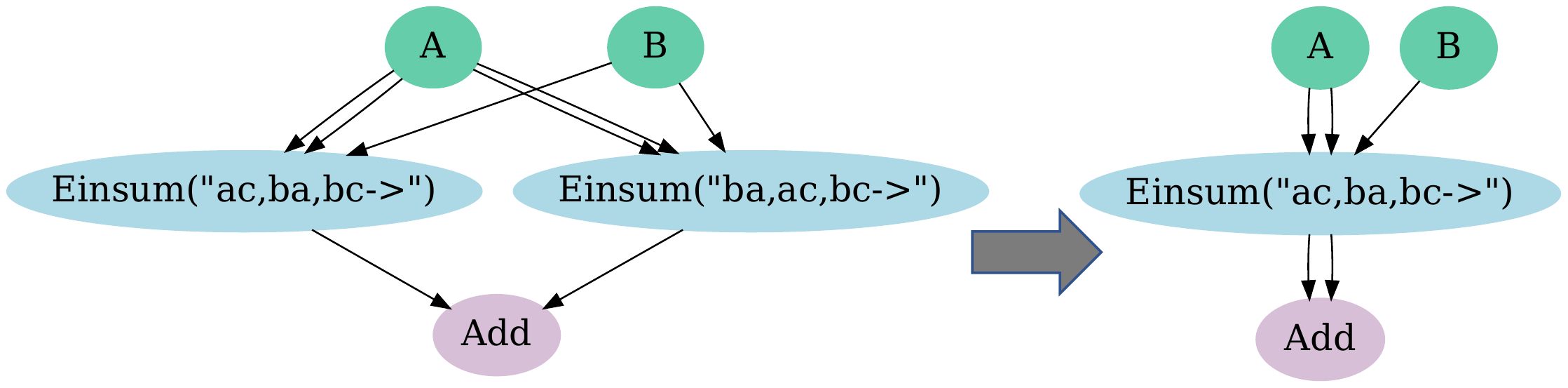}
\label{subfig:cse}
}
\subfloat[Optimal contraction path]{
\includegraphics[width=.495\textwidth]{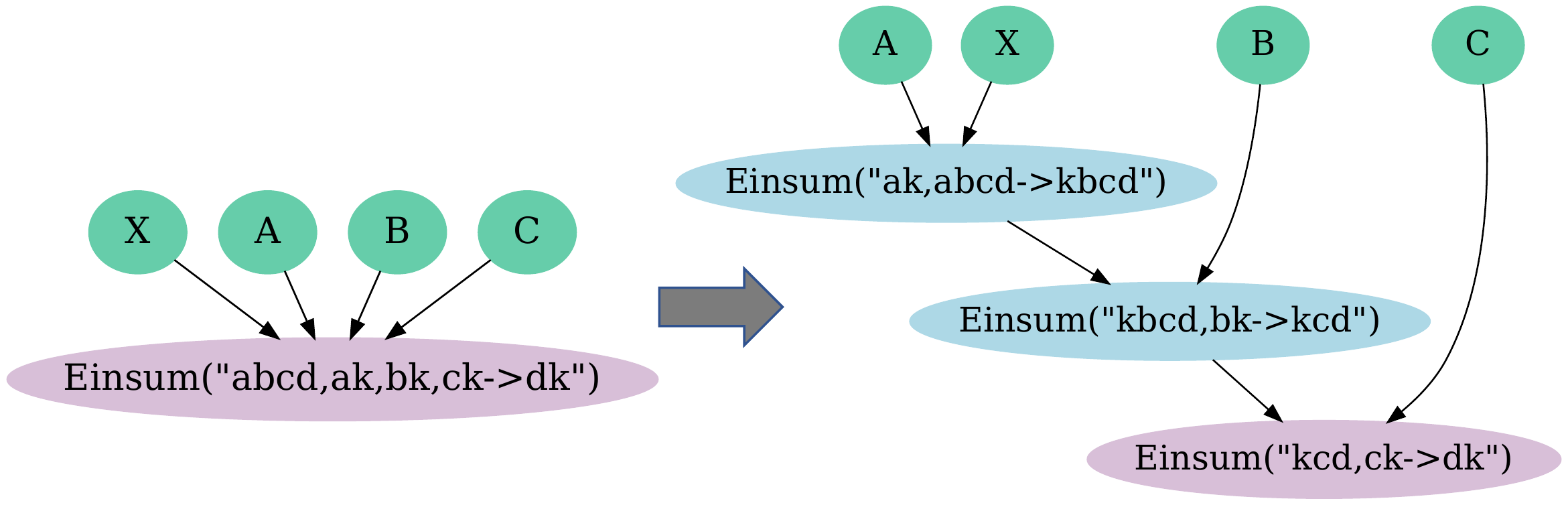}
\label{subfig:optpath}
}

\subfloat[Dimension tree generation]{
\includegraphics[width=.99\textwidth]{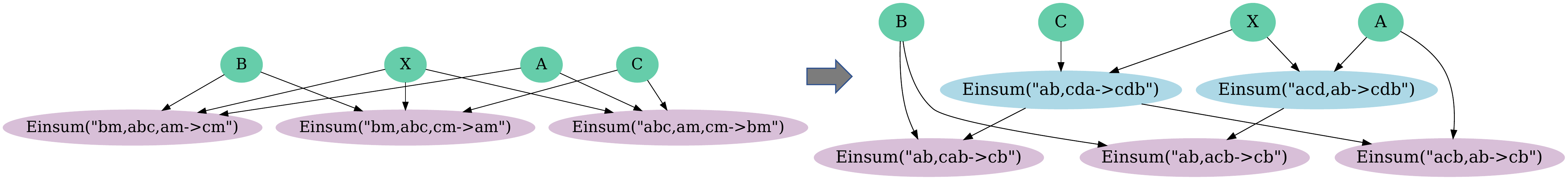}
\label{subfig:dt}
}

\caption{Visualization of different graph optimization kernels.}
\end{figure*}

\begin{figure}[]
\centering
\includegraphics[width=.35\textwidth]{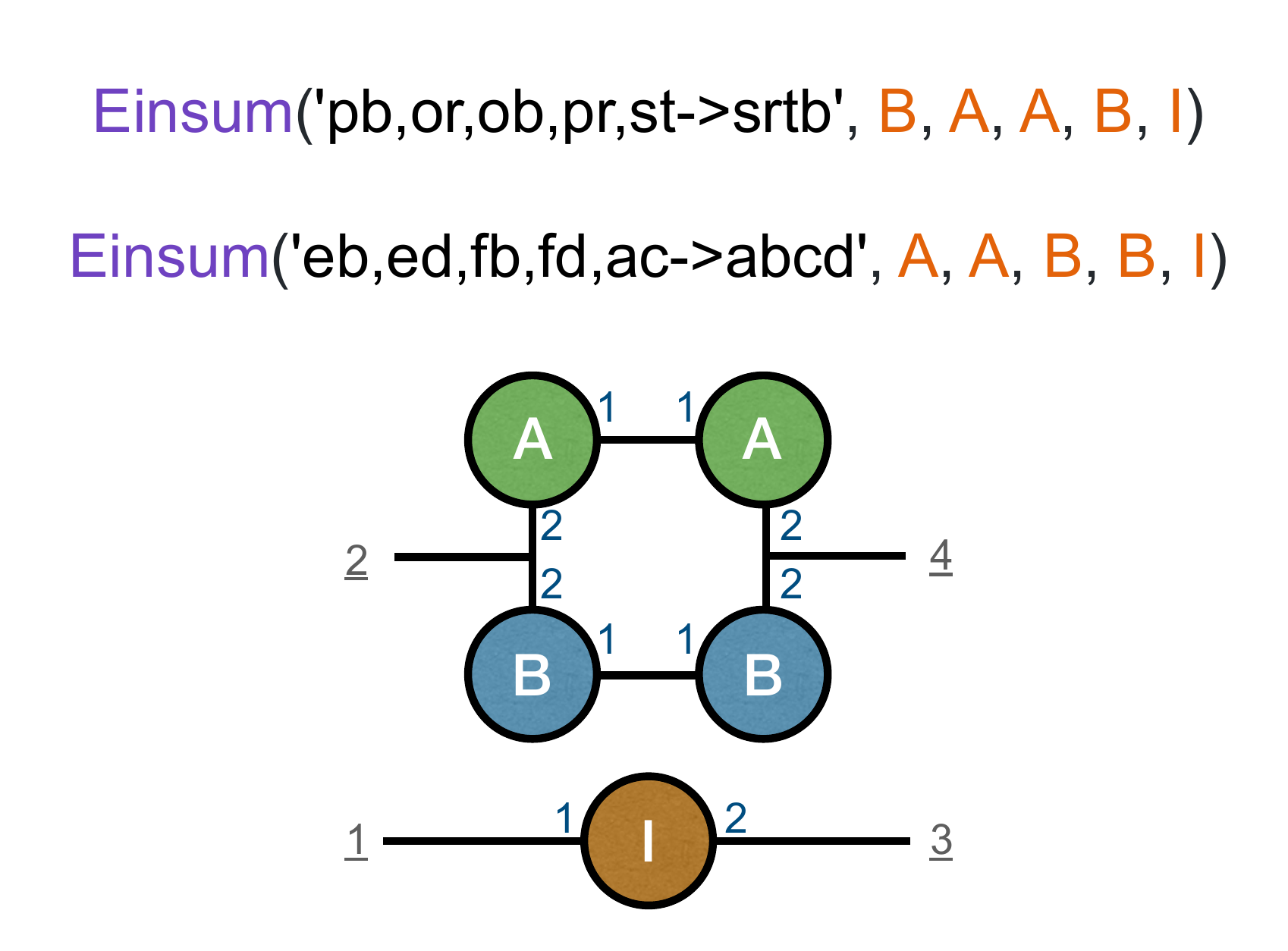}
\caption{Tensor diagram of two Einsum expressions with the same tensor computations. The numbers around the input tensor denote the dimension numbers that are contracted by specific edges. The underlined numbers denote the dimension number of the output tensor. Two Einsum expressions with the same tensor diagram express the same tensor computations.}
\label{fig:csediagram}
\end{figure}

We built a compiler to optimize tensor computational graphs. The compiler is specifically designed for tensor expressions with multilinear operations, including tensor contractions (Einsum) and linear algebra operations (addition, multiplication, summation, inversion and so on).
Our goal is to reduce the computational cost by transforming the graph to an equivalent form. 
Given the fact that retrieving the optimal execution graph is NP-hard, we devise several application-driven heuristic strategies: 
\begin{itemize}[topsep=0pt,leftmargin=*]
    \item {\it Generation of longer Einsum nodes:} To achieve this, we implement two kernels, Einsum distribution and Einsum fusion.  
    \item {\it Symbolic rule execution:} We implement the structured inverse node decomposition and redundant node pruning kernels. In addition, we use SymPy~\cite{meurer2017sympy} to simplify elementary algebraic operations.
    \item {\it Contraction order selection:} We select the contraction path on fully simplified expressions. 
    \item {\it Constrained contraction path construction:} To accelerate alternating minimization, we provide a kernel to reuse intermediates between optimization subproblems.
\end{itemize}
Traditional compiler techniques, such as common sub-expressions elimination, are applied after the strategies above. The overall algorithm is described in Algorithm~\ref{alg:overall}.

\begin{algorithm}[t]
\DontPrintSemicolon
\caption{Graph optimization}
\label{alg:overall}
    \SetAlgoLined
    \SetKwInOut{Input}{input}\SetKwInOut{Output}{output}
    \Input{Input Graph: G}
    \Output{Optimized Graph: OG}
    \SetKwFunction{Distribution}{Distribution}
    \SetKwFunction{FuseAllEinsum}{FuseAllEinsum}
    \SetKwFunction{PruneIdentity}{PruneIdentity}
    \SetKwFunction{OptimizeInverse}{OptimizeInverse}
    \SetKwFunction{OptContractPath}{OptContractPath}
    \SetKwFunction{SympySimplify}{SympySimplify}
    \SetKwFunction{CSE}{CSE}
    \SetKwFunction{SymbolicExecution}{SymbolicExecution}
    
    G = \FuseAllEinsum(\Distribution(G))  \Comment{Provide longer Einsums}
    
    G = \SymbolicExecution(G) \Comment{Decompose Inverse / Prune identity / SymPy}
    
    G = \OptContractPath(G) \Comment{ Find efficient contraction order}
    
    OG = \CSE(G) \Comment{Common Subexpression Elimination}
    
    return OG
\end{algorithm}

\subsection{Longer Einsum Nodes Generation}

We aim to transform the computational graph into Einsum nodes with as many inputs as possible. This optimization will empower the contraction path selection with a global view and ease the discovery of optimizable patterns for downstream algorithms. To achieve this, we introduce two transformation kernels. 

\textbf{Einsum distribution.} 
Einsum distribution recursively leverages distributivity of tensor contraction over tensor addition (or another distributive operation) to generate larger Einsum graphs. Larger Einsum graphs are the prerequisite for further graph depth reduction. This optimization moves the nodes performing the distributive operation (\texttt{dist\_op}) closer to the graph sinks based on the programmatic rule below. Figure~\ref{subfig:distribute} illustrates the idea of an application of the algorithm while the pseudo-code can be found in Algorithm~\ref{alg:distribute} in the Appendix.
\begin{python}
Einsum(dist_op(g1, g2), g3) = 
dist_op(Einsum(g1, g3), Einsum(g2, g3))
\end{python}

\textbf{Einsum fusion.} Einsum fusion transforms an Einsum graph into several distinct Einsum nodes with the same set of source vertices (inputs) leveraging associativity of tensor contractions. It is a prerequisite for downstream graph optimization steps, such as contraction path selection and identity node pruning. An example can be seen in Figure~\ref{subfig:fusion}.

Einsum fusion has three steps: linearization of the graph, fusion of the generated Einsum Tree, and removal of the redundant clone nodes. The linearization step changes the input Einsum graph into an Einsum tree. When a source node is used in multiple Einsums, we create a clone of it for each Einsum. If an Einsum node has more than one output, we copy the subgraph defining its computation, including itself, and repeat until all nodes have a single output, yielding a forest (set of disconnected trees). The fusion step fuses each generated Einsum tree. It leverages a union-find data structure, which puts two dimensions from two Einsum nodes into one set if they have the same subscript in one Einsum expression. After that, each disjoint set is assigned an unique character for the generation of the subscript of the new Einsum node. 
Finally, the clone node removal step removes the redundant clone nodes and returns an Einsum node. We illustrate both the pseudo-code sketch of the algorithm and the union-find data structure in Algorithm~\ref{alg:ufbuilder} and \ref{alg:fusion} in Appendix~\ref{subsec:appendix_opt}.

\subsection{Symbolic Execution}
We employ several linear algebra constructs that can simplify the computational graph and reduce the computational cost.

\textbf{Structured Tensor inverse decomposition.} An inverse of an Einsum graph may be the bottleneck of the computational graph because of the cubic order complexity. Fortunately, structured information may guide the optimization, e.g, the inverse of a Kronecker Product can be decomposed into the Kronecker product of inverses through $(\mat{A}\otimes \mat{B})^{-1} = \mat{A}^{-1}\otimes \mat{B}^{-1}$. We develop an algorithm to detect and break large tensor inverses into products of smaller tensor inverses so that the computation is cheaper. To keep it simple, the algorithm limits its applicability to specific forms of the tensors, and further details are described in Appendix~\ref{appendix:inverse}. An illustrative example is shown is Figure~\ref{subfig:optinverse}.


\textbf{Redundant node pruning.} We prune the redundant nodes, including the Identity nodes and the inverse nodes, to simplify the expressions. Identity nodes are essential building blocks for the explicit Jacobian and the Hessian expressions, as is shown in Section~\ref{subsec:explicithessian}. During the AD, redundant Identity nodes are introduced to aid the construction of the graph. Hence, we implemented an algorithm to eliminate the unnecessary identity nodes afterwards for better efficiency. Identity nodes are removed unless they express necessary constraints in the output tensor structure, such as the tensor symmetry shown in the right graph of Figure~\ref{subfig:identity}. In addition, we prune the unnecessary inverse nodes, as is shown in Figure~\ref{subfig:pruneinverse}. When there exists an matrix multiplication between an Einsum Node and its corresponding inverse node, we directly return an identity node.

\textbf{Elementary algebraic simplification.} For elementary operations, such as addition, subtraction and multiplication, we use the SymPy library~\cite{meurer2017sympy} to optimize them. SymPy can help us easily simplify the expressions. For the example shown below, it helps reducing the expression to one term.
\begin{python}
sympy_simplify(
    (A-(((A*0.5)-(T*0.5))+((A*0.5)-(T*0.5))))
) = T
\end{python}

\subsection{Optimized Contraction Path Selection}
We identify the optimal contraction path for the Einsum expression after all the above transformations. For one Einsum node with multiple inputs, we provide an function to decompose it into an Einsum graph with the optimized contraction path, as is shown in Figure~\ref{subfig:optpath}.
Our strategy is designed for the common tensor contractions with the following two assumptions:
\begin{itemize}[topsep=0pt,leftmargin=*]
    \item For simplicity, we only discuss the case where tensors are dense, and for a long Einsum expression with multiple inputs, it will first be split into multiple small Einsum expressions, each has only two inputs, and then dense tensor contractions will be executed.
    \item The chosen contraction path is hardware oblivious. We assume the contraction time for each operation is proportional to the flop counts. Other factors, such as the communication cost among different processes under the parallel execution settings, are not considered.
\end{itemize}
These assumptions allow us to implement the algorithm based on 
an interface provided by Opt\_Einsum~\cite{smith2018opt}.
Note that whether we can find the optimal contraction path is based on the optimization algorithm, but we generally found that a greedy search algorithm is able to provide an optimal path for most of the Einsum expressions in tensor computation applications.

In addition, the assumptions above are not limitations of our overall approach. AutoHOOT is also capable of extracting the contraction path based on other libraries, such as Cyclops~\cite{zhang2019enabling}, where hardware and tensor sparsity are considered in the algorithm.

\subsection{Constrained Contraction Path Construction} 
We provide a constrained contraction path selection routine, such that the contraction path is optimized under the constraint that partial inputs' contraction order is fixed. This routine is critical for the dimension tree construction used in the alternating minimization algorithms. Consider Equation~\ref{eq:als}, with the update sequence in each sweep starting from $\tsr{A}_1$ and ending at $\tsr{A}_N$, for the Einsum node used to update $\tsr{A}_i$, where $i\in \{1,\ldots,N\}$, we generate the contraction path such that it is optimized under the constraint that the contraction order for all the target sites is $\tsr{A}_N \prec \cdots \prec \tsr{A}_{i+1} \prec \tsr{A}_1 \prec \cdots \prec \tsr{A}_{i-1}$.
This order ensures that the tensor that is updated just previously, $\tsr{A}_{i-1}$, affects only the last part of the contraction path, enabling the reuse of the calculations prior to it in the path as much as possible.

The constrained path selection algorithm is illustrated in Algorithm~\ref{alg:path_w_constraint}, and is implemented on top of the unconstrained one and uses the greedy search heuristic. We find that this heuristic works well for all the dimension tree selection in the tensor computation applications tested in Section~\ref{sec:bench}. An example is shown in Figure~\ref{subfig:dt}, which illustrate the dimension construction for the Matricized Tensor Times Khatri-Rao Product (MTTKRP) calculations of an order 3 CP decomposition. The pseudo-code is illustrated in Algorithm~\ref{alg:dt} in the Appendix.

\begin{algorithm}[t]
\DontPrintSemicolon
\caption{Opt\_contraction\_path\_w\_constraint}
\label{alg:path_w_constraint}
    \SetAlgoLined
    \SetKwInOut{Input}{input}\SetKwInOut{Output}{output}
    \Input{Einsum Node: N, Contraction order list: L}
    \Output{Einsum Tree: T}
    \SetKwFunction{SplitEinsum}{SplitEinsum}
    \SetKwFunction{Optcontractionpath}{OptContractPath}
    \SetKwFunction{ancestor}{Get\_nearest\_ancestor}
    \SetKwFunction{substitute}{Substitute\_graph}
    $n=$ length(L)

    T = N
    \Comment{Initialize tree with single Einsum node}

    \For{i $\in$ $\{1,\ldots,n\}$} {
        split\_T = \SplitEinsum(T, L[i+1:n]) 
        \Comment{Split T into an Einsum node that contracts all input nodes apart from L[i+1:n] and the subgraph induced by the remaining nodes, returning the former}
        
        opt\_contract\_subtree = \Optcontractionpath(split\_T)
        \Comment{Unconstrained optimized contraction path}
        
        opt\_contract\_subtree = \ancestor(opt\_contract\_subtree, L[i])
        \Comment{Get the tree whose sink is the nearest ancestor of L[i]}
        
        T = \substitute(T, opt\_contract\_subtree)
        \Comment{Return the equivalent graph of T whose inputs contain opt\_contract\_subtree}
    }
    return T
\end{algorithm}

\subsection{Common Subexpression Elimination (CSE)}
CSE is used to remove the duplicated Einsum expressions generated from the path selection above. We show one example in Figure~\ref{subfig:cse}, where CSE helps saving one Einsum calculation. However, CSE is nontrivial for Einsum nodes because different Einsum subscripts may represent the same computation. 
We show an example in Figure~\ref{fig:csediagram} where two Einsum nodes represent the same calculation despite different input ordering and subscripts. Hence, we transfer an Einsum expression into a tensor diagram graph, and compare the graph structures between two expressions. 

Moreover, two nodes in an Einsum graph may be transpositions of each other. After detecting such conditions, we replace one of the nodes with its transpose node and update its outputs' expressions therein. This optimization greatly reduces the computation cost when transposes of large tensors appear in the graph.

\section{Benchmarks}
\label{sec:bench}

\begin{table*}[h]
\small
    \centering
    \begin{tabular}{|c|c|c|c|c|c|c|c|c|}
        \hline
        CPD Kernel & Size ($s$) & Backend & Backend AD & AD & AD + OPT1 & AD + OPT1,2 & AD + OPT1,2,3 & Speed-up \\
    	\hline
        \multirow{2}{*}{GN Jacobian} & \multirow{2}{*}{25} & \multicolumn{1}{c|}{JAX} & \multicolumn{1}{c|}{0.1449s} & \multicolumn{1}{c|}{0.0632s} & \multicolumn{1}{c|}{0.0126s} & \multicolumn{1}{c|}{0.0126s} & 
        \multicolumn{1}{c|}{0.0126s} &  \multicolumn{1}{c|}{11X} 
        \\\cline{3-9}
          &  & \multicolumn{1}{c|}{TensorFlow} & \multicolumn{1}{c|}{1.5201s} & \multicolumn{1}{c|}{0.1037s} & \multicolumn{1}{c|}{0.0029s} & \multicolumn{1}{c|}{0.0029s} & \multicolumn{1}{c|}{0.0029s} & \multicolumn{1}{c|}{524X} 
            \\\hline

        \multirow{4}{*}{GN HVP} & \multirow{2}{*}{40} & \multicolumn{1}{c|}{JAX} & \multicolumn{1}{c|}{0.0107s} & \multicolumn{1}{c|}{0.0011s} & \multicolumn{1}{c|}{0.0012s} & \multicolumn{1}{c|}{0.0011s} & 
        \multicolumn{1}{c|}{0.0011s} &  \multicolumn{1}{c|}{9X} 
        \\\cline{3-9}
          &  & \multicolumn{1}{c|}{TensorFlow} & \multicolumn{1}{c|}{0.0040s} & \multicolumn{1}{c|}{0.0027s} & \multicolumn{1}{c|}{0.0048s} & \multicolumn{1}{c|}{0.0048s} & 
        \multicolumn{1}{c|}{0.0048s} &  \multicolumn{1}{c|}{0.8X}  
          \\\cline{2-9}
         & \multirow{2}{*}{640} & \multicolumn{1}{c|}{JAX} & \multicolumn{1}{c|}{0.3742s} & \multicolumn{1}{c|}{0.776s} & \multicolumn{1}{c|}{0.0056s} & \multicolumn{1}{c|}{0.0054s} & \multicolumn{1}{c|}{0.0051s} & \multicolumn{1}{c|}{73X} 
        \\\cline{3-9}
          &  & \multicolumn{1}{c|}{TensorFlow} & \multicolumn{1}{c|}{0.9669s} & \multicolumn{1}{c|}{0.9746s} & \multicolumn{1}{c|}{0.4470s} & \multicolumn{1}{c|}{0.3422s} & \multicolumn{1}{c|}{0.2795s} & \multicolumn{1}{c|}{3X} 
            \\\hline

        \multirow{4}{*}{ALS Hessian} & \multirow{2}{*}{40} & \multicolumn{1}{c|}{JAX} & \multicolumn{1}{c|}{0.0713s} & \multicolumn{1}{c|}{OOM} & \multicolumn{1}{c|}{0.0017s} & \multicolumn{1}{c|}{0.0017s} & 
        \multicolumn{1}{c|}{0.0017s} &  \multicolumn{1}{c|}{41X} 
        \\\cline{3-9}
          &  & \multicolumn{1}{c|}{TensorFlow} & \multicolumn{1}{c|}{0.3643s} & \multicolumn{1}{c|}{OOM} & \multicolumn{1}{c|}{0.0021s} & \multicolumn{1}{c|}{0.0021s} & \multicolumn{1}{c|}{0.0014s} & \multicolumn{1}{c|}{260X}  
          \\\cline{2-9}
         & \multirow{2}{*}{160} & \multicolumn{1}{c|}{JAX} & \multicolumn{1}{c|}{OOM} & \multicolumn{1}{c|}{OOM} & \multicolumn{1}{c|}{1.0682s} & \multicolumn{1}{c|}{1.0682s} & 
        \multicolumn{1}{c|}{0.8141s} &  \multicolumn{1}{c|}{/} 
        \\\cline{3-9}
          &  & \multicolumn{1}{c|}{TensorFlow} & \multicolumn{1}{c|}{OOM} & \multicolumn{1}{c|}{OOM} & \multicolumn{1}{c|}{3.0164s} & \multicolumn{1}{c|}{3.0164s} & \multicolumn{1}{c|}{1.5405s} & \multicolumn{1}{c|}{/} 
            \\\hline

        \multirow{4}{*}{ALS Hessian inv} & \multirow{2}{*}{40} & \multicolumn{1}{c|}{JAX} & \multicolumn{1}{c|}{0.1623s} & \multicolumn{1}{c|}{OOM} & \multicolumn{1}{c|}{0.0908s} & \multicolumn{1}{c|}{0.0090s} & 
        \multicolumn{1}{c|}{0.0090s} &  \multicolumn{1}{c|}{18X} 
        \\\cline{3-9}
          &  & \multicolumn{1}{c|}{TensorFlow} & \multicolumn{1}{c|}{0.4237s} & \multicolumn{1}{c|}{OOM} & \multicolumn{1}{c|}{0.0278s} & \multicolumn{1}{c|}{0.0028s} & \multicolumn{1}{c|}{0.0028s} & \multicolumn{1}{c|}{151X}  
          \\\cline{2-9}

         & \multirow{2}{*}{160} & \multicolumn{1}{c|}{JAX} & \multicolumn{1}{c|}{OOM} & \multicolumn{1}{c|}{OOM} & \multicolumn{1}{c|}{13.13s} & \multicolumn{1}{c|}{1.5160s} & 
        \multicolumn{1}{c|}{1.5110s} &  \multicolumn{1}{c|}{/} 
        \\\cline{3-9}
          &  & \multicolumn{1}{c|}{TensorFlow} & \multicolumn{1}{c|}{OOM} & \multicolumn{1}{c|}{OOM} & \multicolumn{1}{c|}{OOM} & \multicolumn{1}{c|}{0.5786s} & \multicolumn{1}{c|}{0.5585s} & \multicolumn{1}{c|}{/} 
            \\\hline
    \end{tabular}
    \caption{
    Detailed performance gain from each graph optimization technique on different CPD kernels. 
    The rank is set the same as the input tensor dimension/size along each mode ($s$). Results are collected on an NVIDIA Titan X GPU. We denote each technique as: Einsum fusion + distribution: OPT1, Symbolic optimization: OPT2, CSE: OPT3.
    }
    \label{tab:kernel_comparison}
\end{table*}
\normalsize

\begin{figure*}[]
\centering
\subfloat[KNL CPU]{
\includegraphics[width=.32\textwidth]{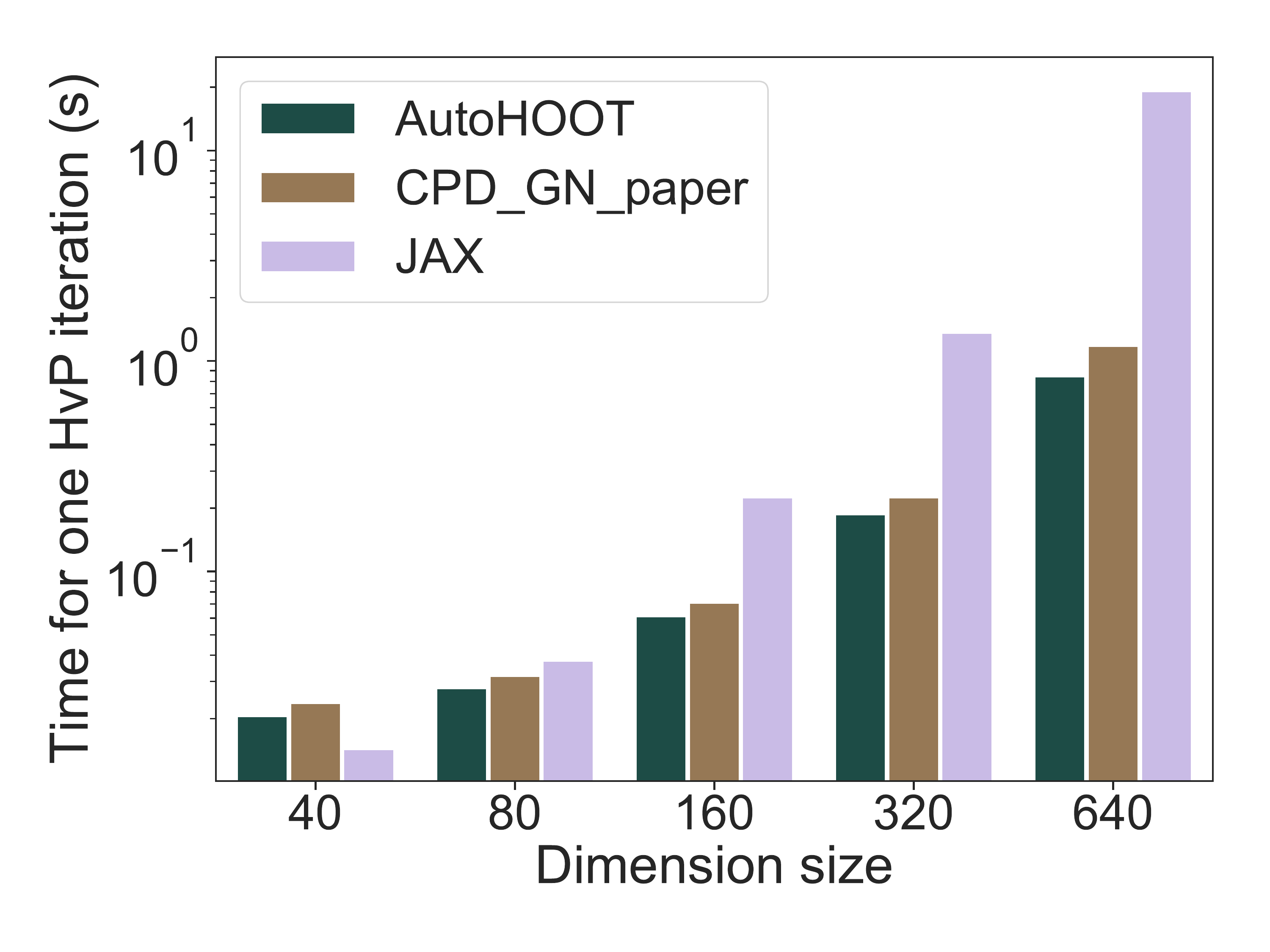}
\label{subfig:cpd-gn-cpu}
}
\subfloat[TESLA K80 GPU]{
\includegraphics[width=.32\textwidth]{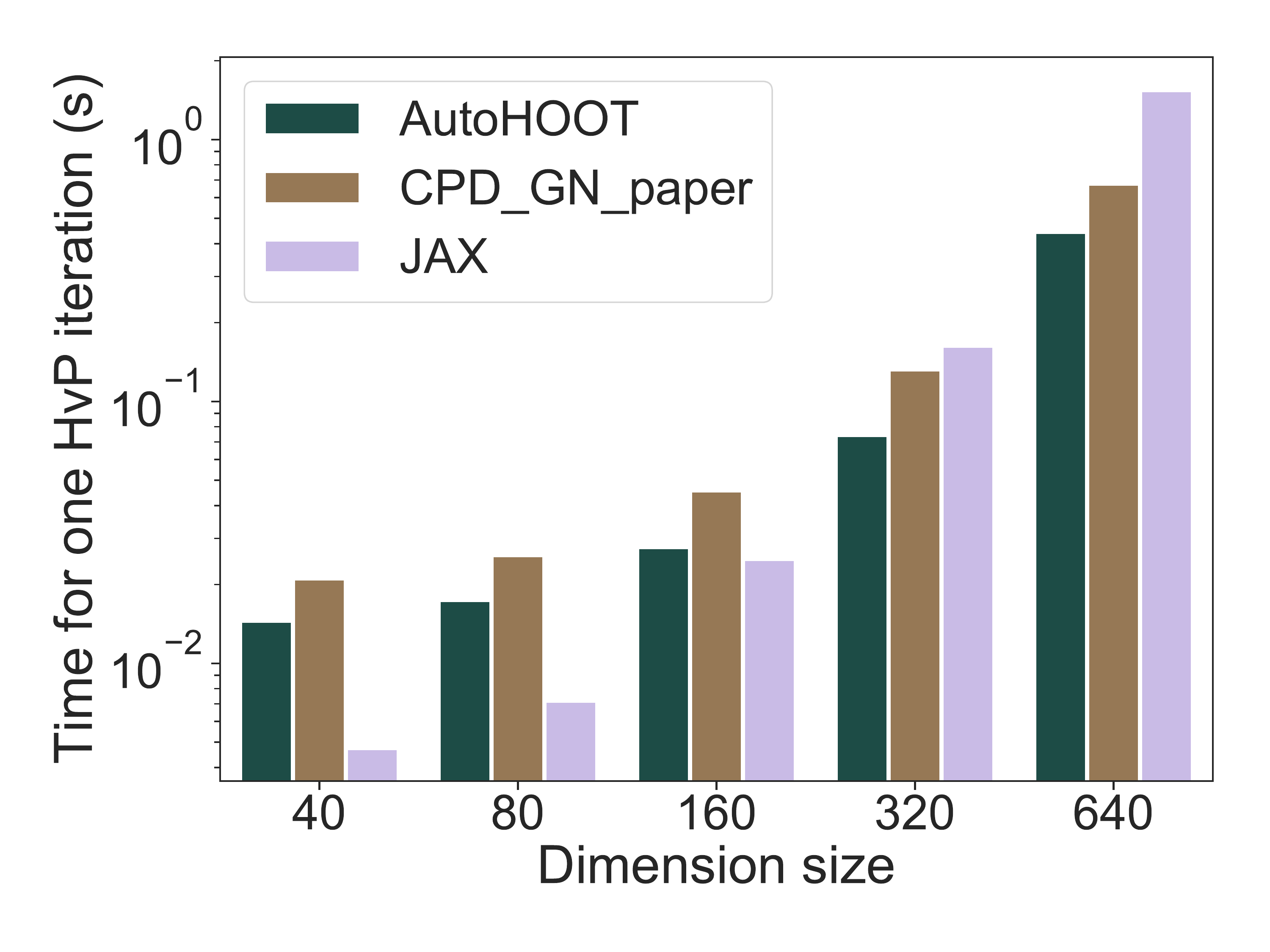}
\label{subfig:cpd-gn-gpu}
}
\caption{Performance comparison among AutoHOOT, JAX and the existing implementation for the HVP kernel in the Gauss-Newton algorithm for the CP decomposition. The implementation of CPD\_GN\_paper comes from reference~\cite{singh2019comparison}. The tensor order is set as $N=3$, and the CP rank is set equal to the dimension size. Each bar is the average result of 10 iterations.
}
\label{fig:cpd-gn}
\end{figure*}

\begin{figure*}[]
\centering
\subfloat[NumPy, CPD, $R=s$]{
\includegraphics[width=.32\textwidth]{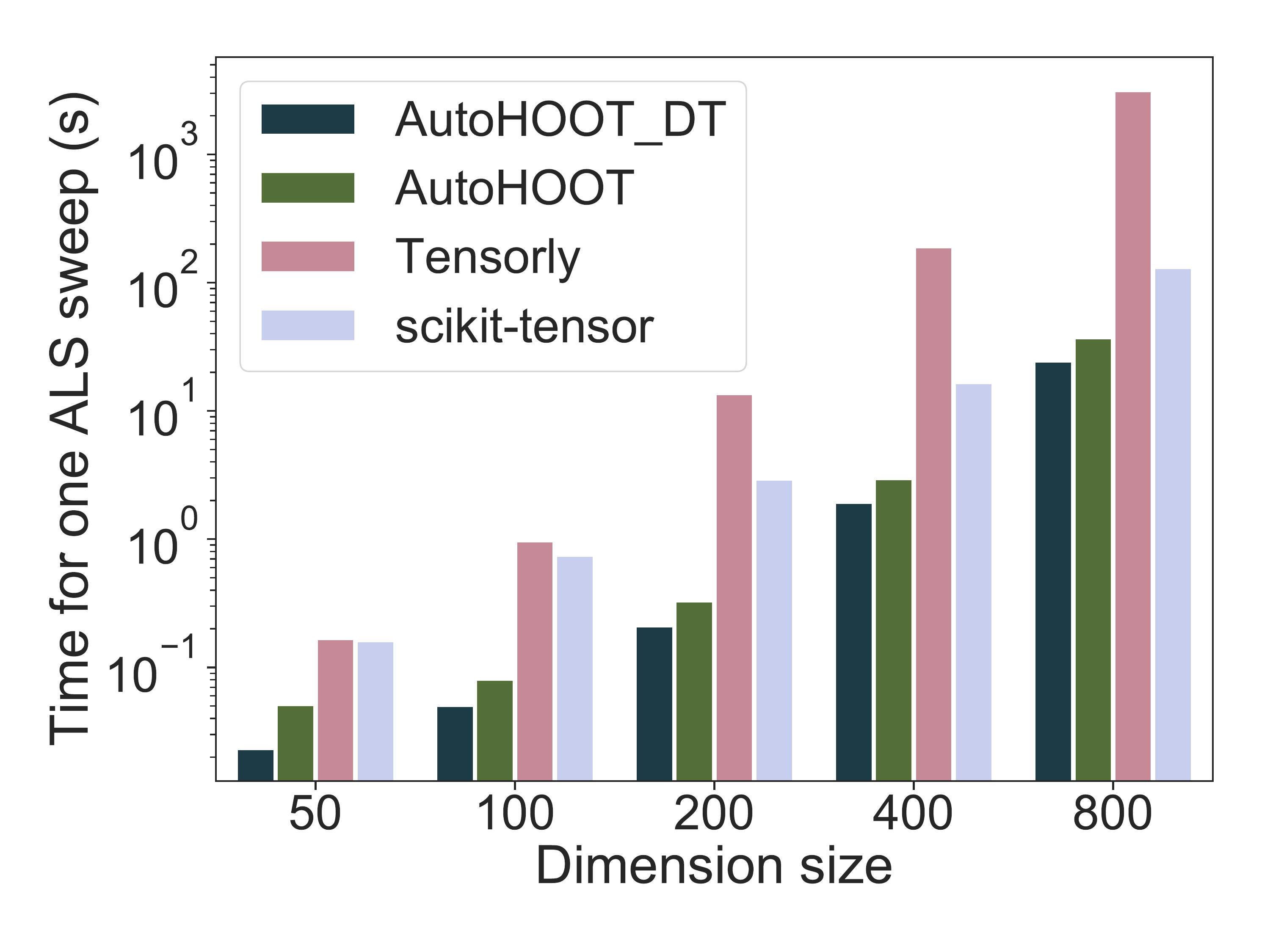}
\label{subfig:cpd-als-cpu}
}
\subfloat[TensorFlow, CPD, $R=s$]{
\includegraphics[width=.32\textwidth]{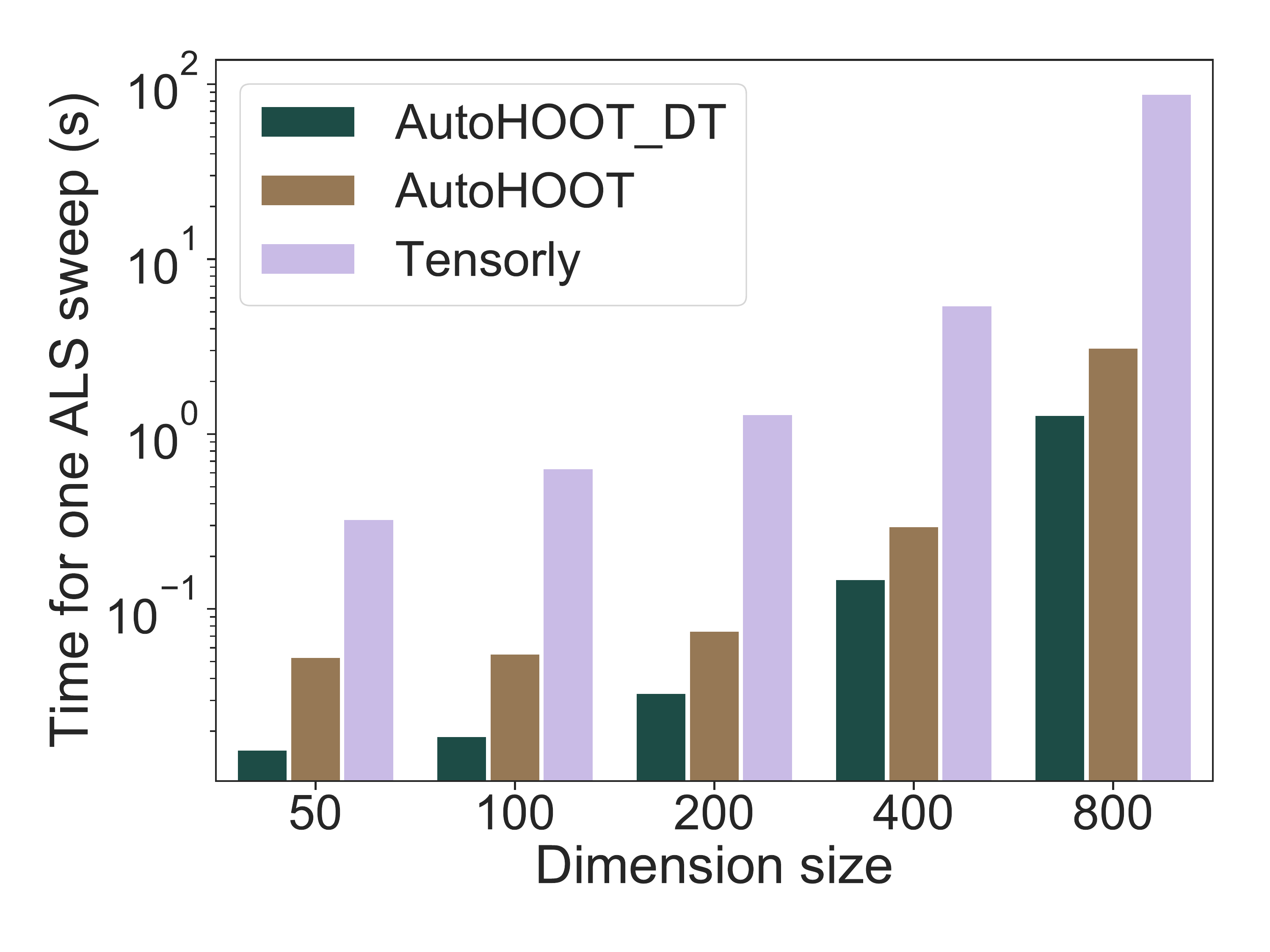}
\label{subfig:cpd-als-gpu}
}
\subfloat[Cyclops, CPD, $R=400, s=\lfloor 1320 n^{\frac{1}{3}}\rfloor$]{
\includegraphics[width=.32\textwidth]{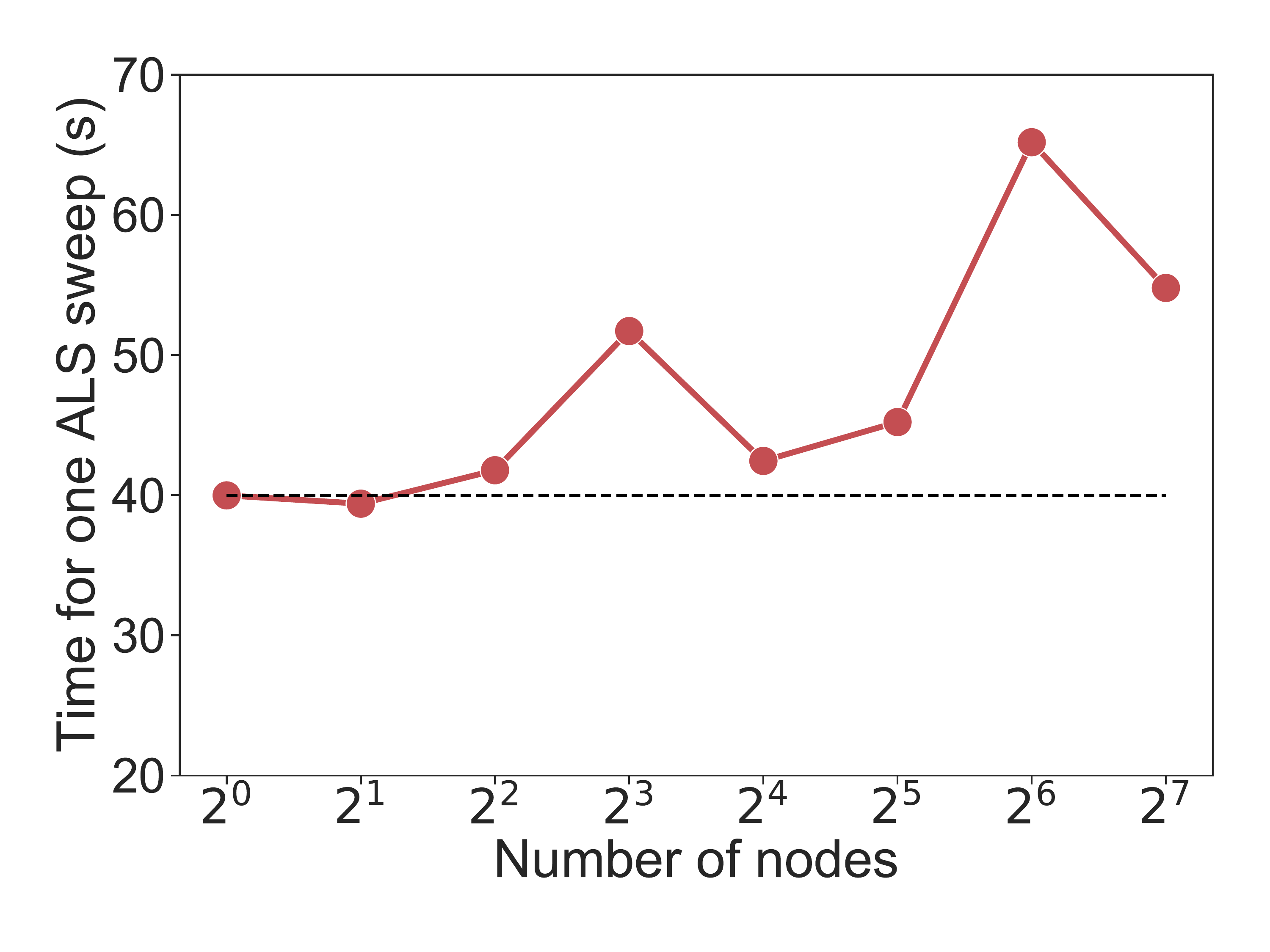}
\label{subfig:cpd-als-ctf}
}

\subfloat[NumPy, Tucker, $R=0.5s$]{
\includegraphics[width=.32\textwidth]{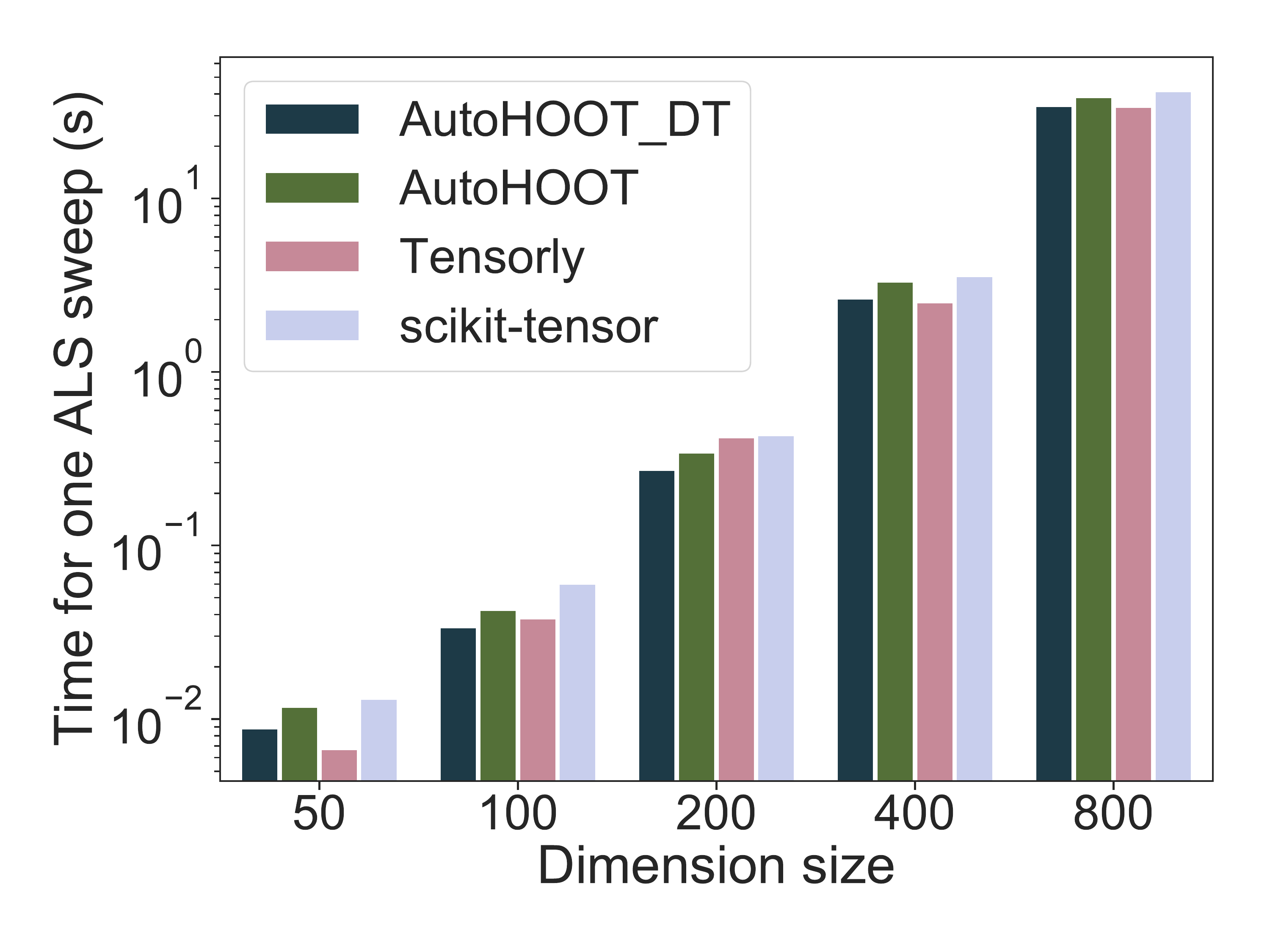}
\label{subfig:tucker-als-cpu}
}
\subfloat[TensorFlow, Tucker, $R=0.5s$]{
\includegraphics[width=.32\textwidth]{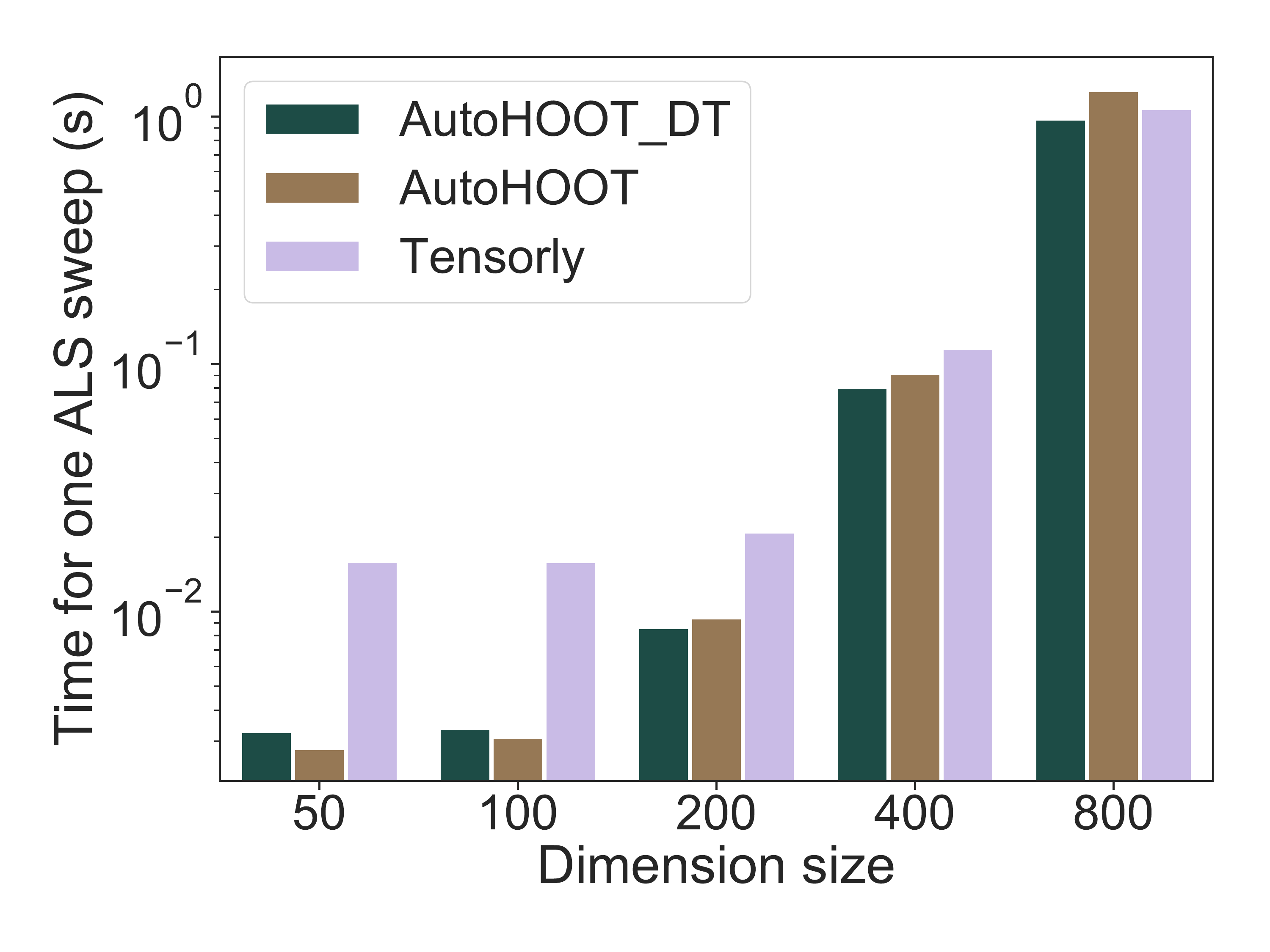}
\label{subfig:tucker-als-gpu}
}
\subfloat[Cyclops, Tucker, $R=400, s=\lfloor 1240 n^{\frac{1}{3}}\rfloor$]{
\includegraphics[width=.32\textwidth]{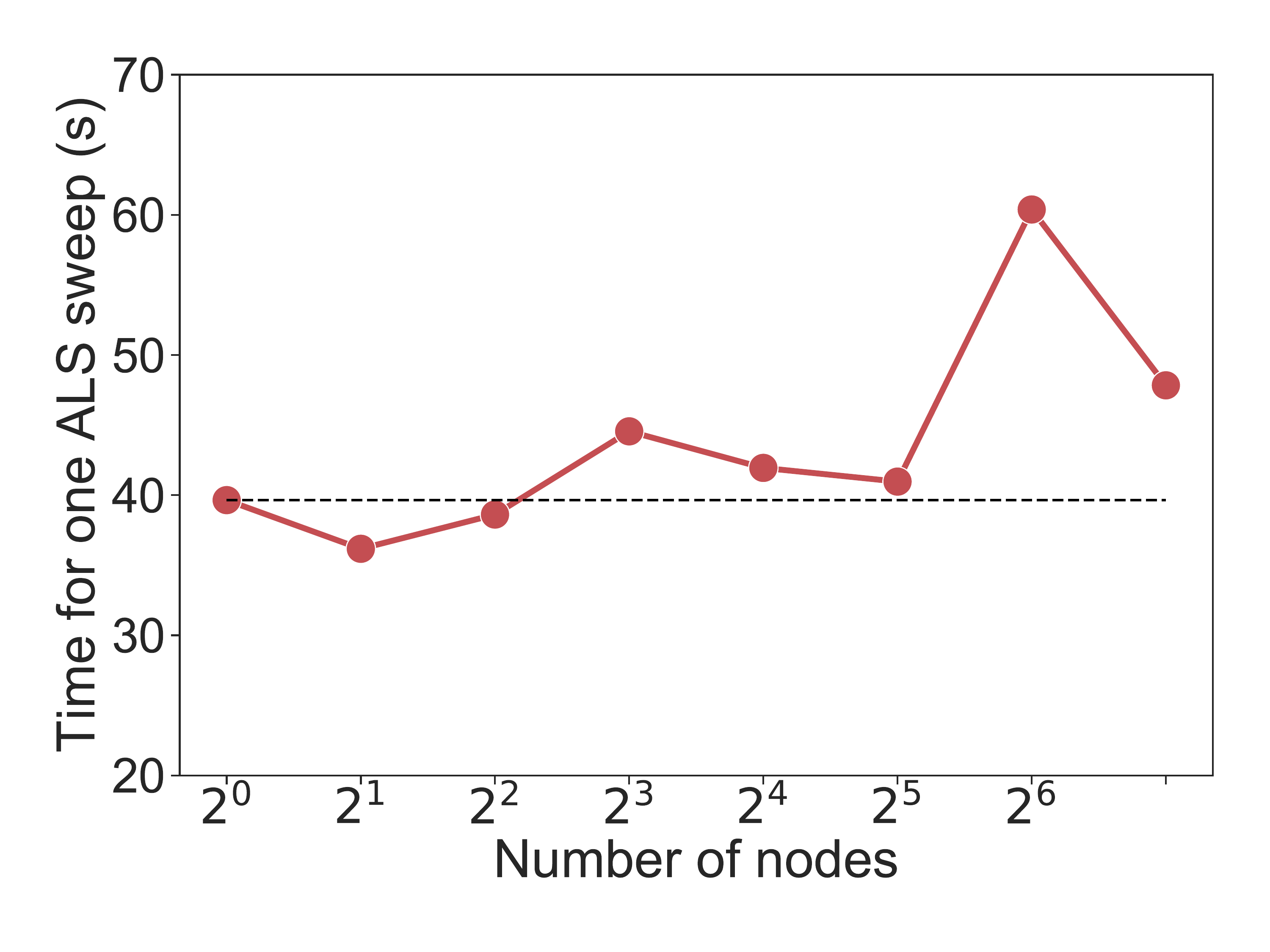}
\label{subfig:tucker-als-ctf}
}

\subfloat[NumPy, DMRG, $s=R$]{
\includegraphics[width=.32\textwidth]{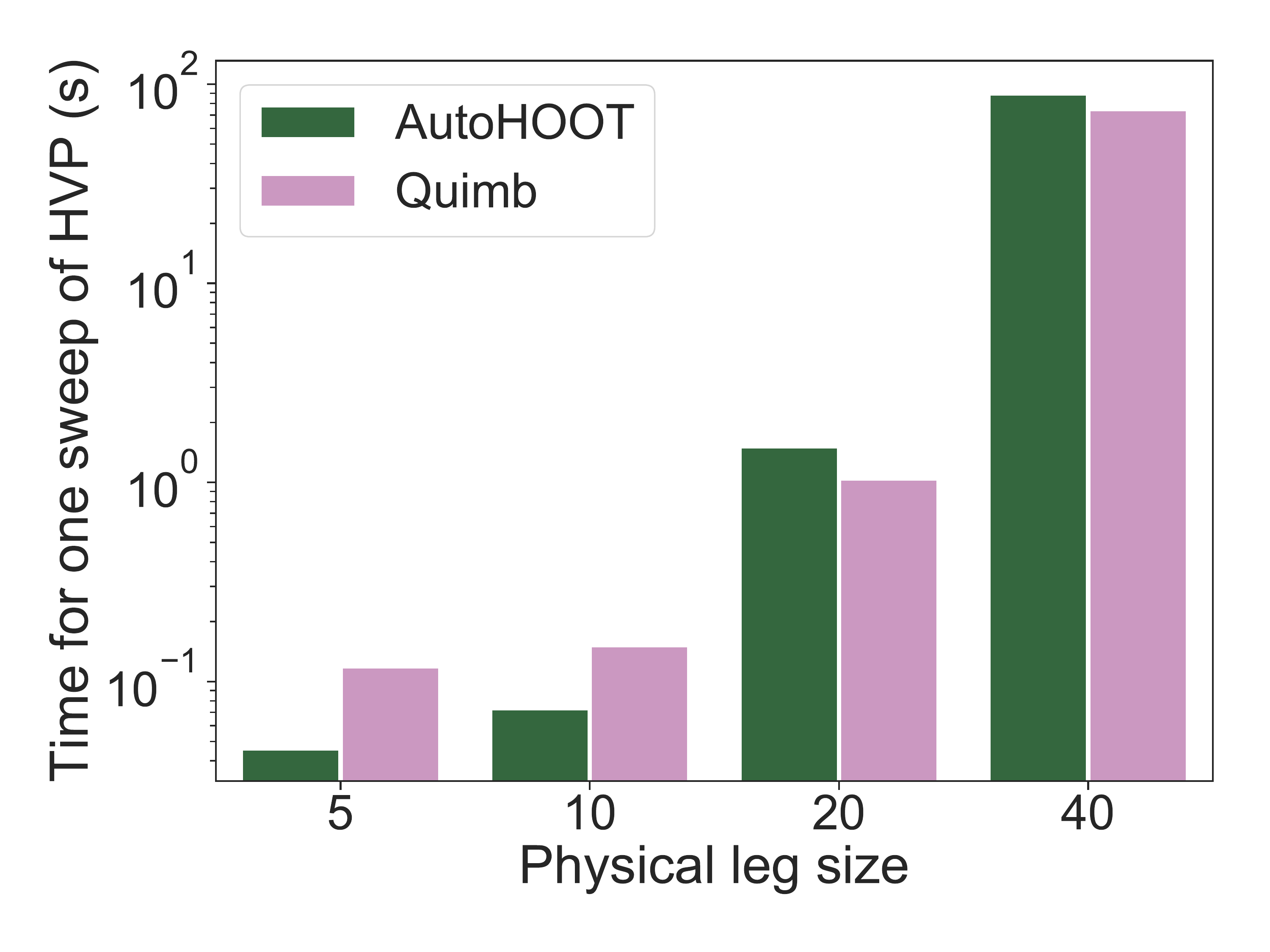}
\label{subfig:dmrg-als-cpu}
}
\subfloat[TensorFlow, DMRG, $s=R$]{
\includegraphics[width=.32\textwidth]{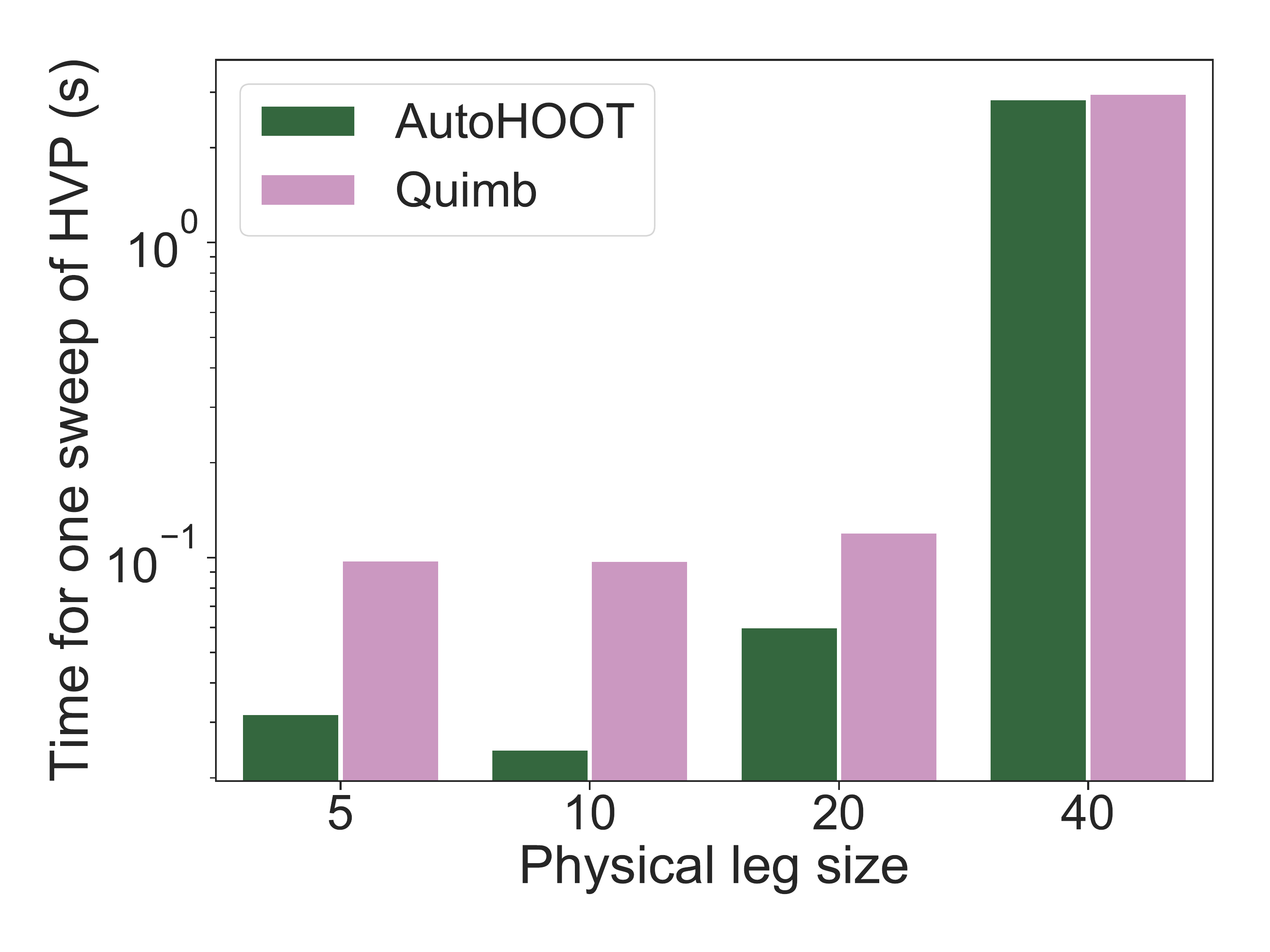}
\label{subfig:dmrg-als-gpu}
}
\subfloat[Cyclops, DMRG, $R=s=\lfloor 50n^{\frac{1}{5}} \rfloor$]{
\includegraphics[width=.32\textwidth]{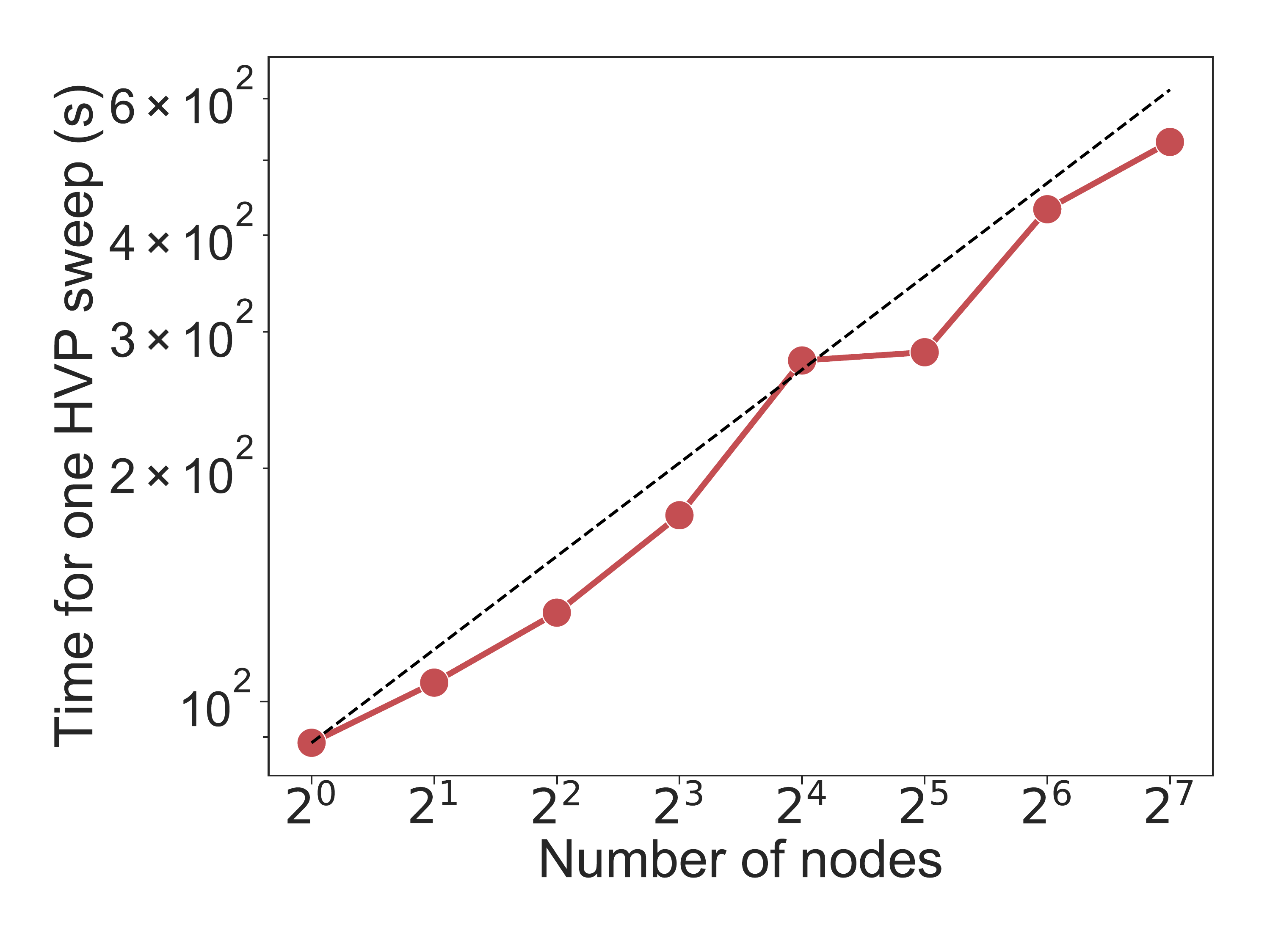}
\label{subfig:dmrg-als-ctf}
}
\caption{AutoHOOT performance for kernels in the alternating minimization. 
\textbf{(a)-(c)}: Results for the CP decomposition. The tensor order is set as $N=3$ for all the experiments. 
\textbf{(d)-(f)}: Results for the Tucker decomposition. The tensor order is set as $N=3$ for all the experiments. 
\textbf{(g)-(i)}: Results for the DMRG experiment. The number of sites is set as $N=10$ for the experiments with NumPy and TensorFlow, and set as $N=6$ for the experiments with Cyclops. 
For the Cyclops benchmark, the dotted line denotes the perfect scaling curve. Each bar/dot is the average result of 10 iterations. 
}
\label{fig:als}
\end{figure*}

\begin{figure}[]
\centering
\includegraphics[width=.5\textwidth]{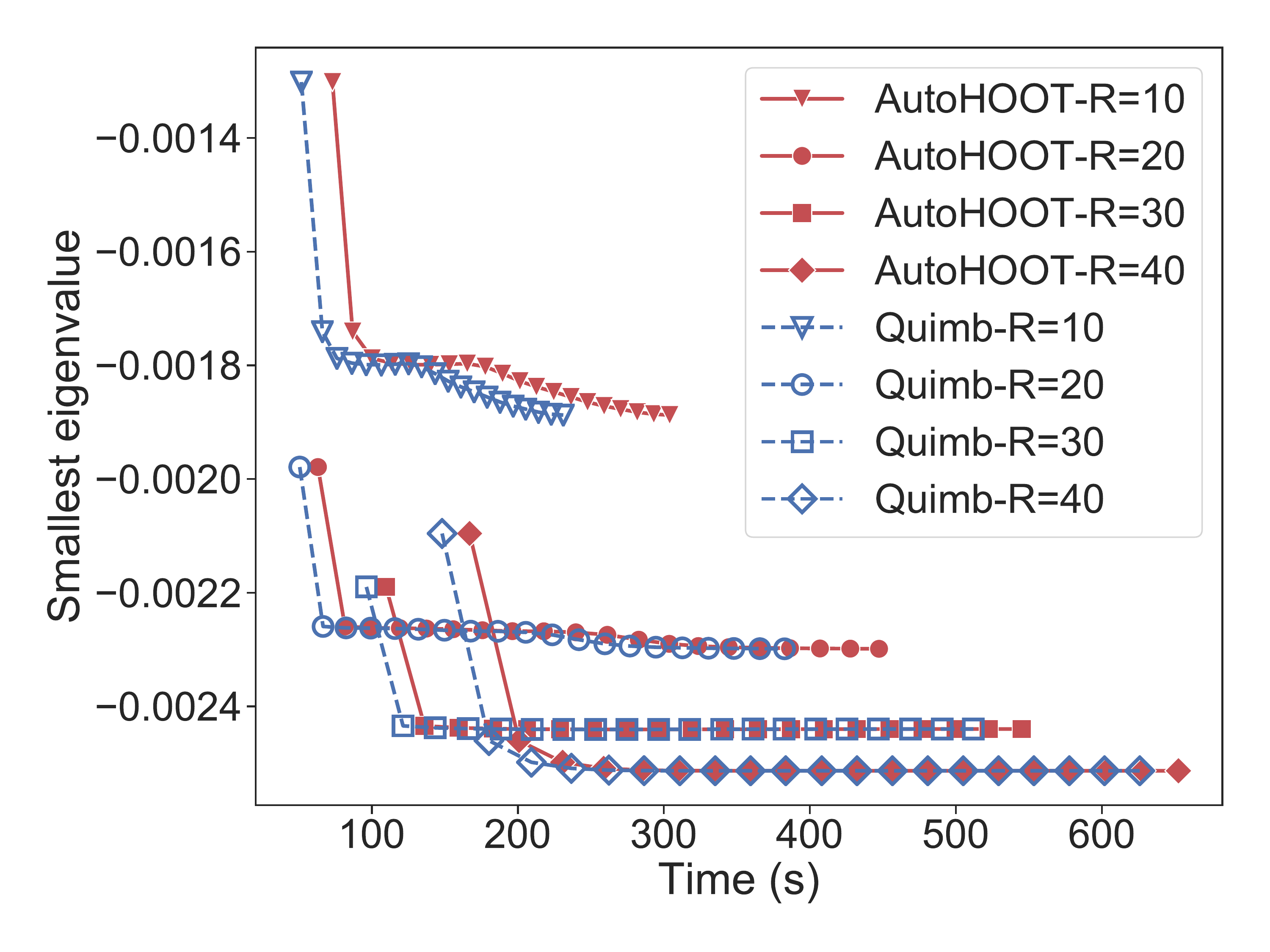}
\caption{Comparison between AutoHOOT and Quimb on the full DMRG running curve. The input MPO is random and symmetric, has 6 sites, and its physical leg size equals 10 and MPO rank size equals 20. We compare the performance under different largest MPS rank constraints.}
\label{fig:dmrgfull}
\end{figure}

We evaluate the performance of AutoHOOT on both the Gauss-Newton method and the alternating minimization method discussed in Section~\ref{subsec:numericalopt}. The performance of the critical Gauss-Newton kernel, the Hessian-Vector Product, is evaluated on the CP decomposition application, where Gauss-Newton with conjugate gradient update is commonly used to achieve high accuracy~\cite{singh2019comparison,sorber2013optimization}. The performance of alternating minimization kernels generated by AutoHOOT is evaluated on both CP and Tucker decompositions, as well as the DMRG algorithm in tensor network applications used to calculate the smallest eigenvalue and eigenvector for a matrix product state.

The experiments are run on both CPUs and GPUs. On CPUs, we test the performance on both one process with the NumPy backend, and on the distributed parallel system with the Cyclops backend. The results are collected on the Stampede2 supercomputer located at the University of Texas at Austin. We leverage the Knight’s Landing (KNL) nodes, each of which consists of 68 cores, 96 GB of DDR RAM, and 16 GB of MCDRAM. These nodes are connected via a 100 Gb/sec fat-tree Omni-Path interconnect. We use Intel compilers and the MKL library for threaded BLAS routines for both sequential and parallel experiments. We use 16 processes per node and 16 threads per process for the Cyclops benchmark experiments. We also collected results with both TensorFlow and JAX backends on both single NVIDIA TESLA K80 GPU and single NVIDIA Titan X GPU.

We first compare the detailed performance gain from each graph optimization technique proposed in Section~\ref{sec:opt}. The experiments are performed on the Jacobians and HVPs kernels in the Gauss-Newton (GN) methods, as well as Hessians and Hessian inverses used in the ALS algorithm for CP decompositions and are shown in Table~\ref{tab:kernel_comparison}. As can be seen in the table, Einsum fusion and distribution are critical for almost all the calculations, and Symbolic optimization is critical for tensor/matrix inverse. In addition, CSE provides incremental performance gain.

The performance of the HVP kernels in the Gauss-Newton algorithm for the CP decomposition is shown in Figure~\ref{fig:cpd-gn}. As can be seen, AutoHOOT has at least 2X speed-up on the GPU and at least 7X speed-up on the CPU compared to JAX when the dimension size $s\geq 320$. 
Note that JAX performs better for small HVP kernels, because the experiments with AutoHOOT are performed on TensorFlow, where JAX has faster small contractions.
It can be seen that the speed-up increases with the increase of the dimension size, indicating the advantage of AutoHOOT for large scale tensor computations. In addition, the AutoHOOT performance is comparable compared to the manually designed algorithms in the reference~\cite{singh2019comparison}, indicating that the kernels generated by AutoHOOT reaches the state-of-art performance boundary.

The performance of the alternating minimization kernels for both tensor decompositions and the DMRG algorithm are shown in Figure~\ref{fig:als}. For the tensor decompositions, we compare the performance of AutoHOOT output expressions, both with and without dimension tree optimizations, to the popular tensor decomposition libraries Tensorly~\cite{kossaifi2019tensorly}, both with NumPy and TensorFlow backend, and scikit-tensor\footnote{\url{https://github.com/mnick/scikit-tensor}} with NumPy backend. For the DMRG algorithm, we compare the performance to Quimb~\cite{gray2018quimb}, which is an efficient library for tensor networks. 

The benchmark results for the CP decomposition with both NumPy and TensorFlow can be seen in Figure~\ref{subfig:cpd-als-cpu}, \ref{subfig:cpd-als-gpu}. 
We compare the performance with different CP ranks ($R$) and dimension size ($s$).
As can be seen, the expressions generated with the dimension tree algorithm outperform all the other implementations.
Note that Tensorly's performance is not as expected for the CP decomposition, because it slices the factor matrices over the rank mode and sums over all the MTTKRP results of the input tensor and the sliced factor matrices, which is not favorable.
The weak scaling benchmark is also performed on the distributed parallel system with Cyclops, shown in Figure~\ref{subfig:cpd-als-ctf}, where we consider weak scaling with fixed input size and work per processor.
The expressions generated from AutoHOOT 
scale well, obtaining 73\% parallel scaling efficiency on 128 nodes (2048 cores).

The benchmark results for the Tucker decomposition with both NumPy and TensorFlow can be seen in Figure~\ref{subfig:tucker-als-cpu}, \ref{subfig:tucker-als-gpu}. 
We compare the performance with different Tucker ranks ($R$) and dimension size ($s$).
Note that we are only comparing the performance of the kernel generated through AutoHOOT to the Tensor Times Matrix-chain (TTMc) implementation in other libraries, which doesn't contain the low rank factorization step of splitting the factor matrix from the core tensor. The expressions generated with the dimension tree algorithm achieve comparable performance to all the other implementations. The weak scaling benchmark is shown in Figure~\ref{subfig:tucker-als-ctf}. Similar to the CP decomposition, the expressions generated from AutoHOOT scale with high efficiency.

The performance results for DMRG can be seen in Figure~\ref{subfig:dmrg-als-cpu}, \ref{subfig:dmrg-als-gpu}, \ref{subfig:dmrg-als-ctf}. We benchmark over sweep of the HVP kernels with different MPO and MPS rank size ($R$) and physical dimension size ($s$),
where the Hessian denotes the local Hessian of the DMRG loss function w.r.t. each local site. In DMRG, the HVP calculations are important kernels for the sparse eigensolver. Multiple HVP calculations are necessary for each site to get the local smallest eigenvalue, making it the computation bottleneck.
The expressions generated with the dimension tree algorithm achieve comparable performance to the implementations in Quimb. In addition, the expressions generated from AutoHOOT scale nearly perfectly with Cyclops up to at least 128 nodes\footnote{In these experiments, we constrain the physical leg size to be equal to the rank, e.g. $s=R$, so the computational cost is $O(R^7)$ and the memory footprint is $O(R^5)$.}. 

We also compare the performance between AutoHOOT and Quimb on the full DMRG experiments. Like Quimb, we use the sparse eigensolver in SciPy~\cite{2020SciPy-NMeth}, and set the solver parameters the same as Quimb. The results are shown in Figure~\ref{fig:dmrgfull}. We test the four cases where the maximum MPS rank ranges from 10 to 40, and the results show that both libraries have the similar performance, while AutoHOOT has a small fixed overhead. 

Note that we did not report the ALS results of other AD libraries, because their performance is far worse than both AutoHOOT and other tensor computation libraries. For both CP and Tucker decompositions, existing AD libraries cannot efficiently decompose the structured inverse operations, leading to a big overhead from inverting large tensors. For the DMRG experiment, existing libraries fail to choose an optimized contraction path, and produce large intermediates which require too much memory.

\section{Conclusion}
\label{sec:conclu}

AutoHOOT is the first automatic differentiation framework targeting high-order optimization for tensor computations.
AutoHOOT contains a new explicit Jacobian / Hessian expression generation kernel whose outputs keep the input tensors' granularity and are easy to optimize. 
It also contains a new computational graph optimization module that combines both the traditional optimization techniques for compilers and techniques based on specific tensor algebra. The optimization module generates expressions as good as manually written codes in other frameworks for the numerical algorithms of tensor computations.
AutoHOOT is compatible with other numerical computation libraries, and users can execute the generated expressions on CPU with NumPy, GPU with TensorFlow, and distributed parallel systems with Cyclops Tensor Framework.
Experimental results show that AutoHOOT has competitive performance on both tensor decomposition and tensor network applications compared to both existing AD software and other tensor computation libraries with manually written kernels, both on CPU and GPU architectures. 

\section{Acknowledgements}
Linjian Ma and Edgar Solomonik were supported by the US NSF OAC SSI program, award No.\ 1931258.
This work used the Extreme Science and Engineering Discovery Environment (XSEDE), which is supported by National Science Foundation grant number ACI-1548562.
We used XSEDE to employ Stampede2 at the Texas Advanced Computing Center (TACC) through allocation TG-CCR180006.

\bibliographystyle{abbrv}
\bibliography{main}

\clearpage
\label{sec:appendix}

\onecolumn

\appendix
\section{Appendix}

\subsection{Background of Tensor Computation Applications}
In this Section we provide background on CP decomposition, Tucker decomposition, and Density Matrix Renormalization Group (DMRG). 

\textbf{CP decomposition.}
The CP tensor decomposition~\cite{hitchcock1927expression,harshman1970foundations} serves to approximate a tensor by a sum of $R$ tensor products of vectors. 
For an order $N$ input tensor $\tsr{X}$ with size $s_1\times\cdots\times s_N$, CP decomposition compresses it into $N$ factor matrices $\mat{A}_1, \ldots, \mat{A}_N$, size of each is $s_i \times R$ for $i\in\{1,\ldots, N\}$. 
The optimization for the CP decomposition is a least squares problem, where element-wise expression for the output of the tensor network $f$ in Equation~\ref{eq:leastsquare} denotes
\begin{align*}
f(\mat{A}_1, \ldots, \mat{A}_N)(x_1, \ldots, x_N) :=  \sum_{k=1}^{R}\prod_{i\in\{1,\ldots,N\}} \mat{A}_{i}(x_i,k).
\end{align*}
Both the Gauss-Newton method and the alternating minimization method, which is also called alternating least squares (ALS) are effective and popular for the CP decomposition.

The CP-ALS method alternates among quadratic optimization problems for each of the factor matrices $\mat{A}_{n}$, resulting in linear least squares problems for each row,  are often solved via the normal equations~\cite{kolda2009tensor},
\[
 \mat{A}_{n}\boldsymbol{\Gamma}_{n}\leftarrow \mat{X}_{(n)}\mat{P}_{n},
\]
where the matrix $\mat{P}_{n}\in \mathbb{R}^{I_n \times R}$, where $I_n = \prod_{i=1, i\neq n}^{N} s_i$, is formed by Khatri-Rao products of the other factor matrices,
\[
    \mat{P}_{n}=\mat{A}_{N} \odot \cdots \odot  \mat{A}_{n+1}  \odot  \mat{A}_{n-1} \odot \cdots \odot \mat{A}_{1}, 
\]
and $\mat{\Gamma}_{n}\in\mathbb{R}^{R\times R}$ can be computed via
\[
\boldsymbol{\Gamma}_{n}=\mat{S}_{1}\ast\cdots\ast \mat{S}_{n-1}  \ast \mat{S}_{n+1}\ast\cdots\ast \mat{S}_{N},
\label{eq:gamma}
\]
with each $\mat{S}_{i} = \mat{A}_{i}^{T}\mat{A}_{i}.$
The \textit{Matricized Tensor Times Khatri-Rao Product} or \mbox{MTTKRP} computation $\mat{M}_{n}=\mat{X}_{(n)}\mat{P}_{n}$ is the main computational bottleneck of CP-ALS~\cite{ballard2017communication}.
Within MTTKRP,  the bottleneck is the contraction between the input tensor and the first-contracted matrix.
For a rank-R CP decomposition, this computation has the leading cost of $2s^NR$ if $s_n=s$ for all $n\in\inti{1}{N}$.
The dimension-tree algorithm for ALS~\cite{phan2013fast,vannieuwenhoven2015computing} uses a fixed amortization scheme to update \mbox{MTTKRP} in each ALS sweep. 
This scheme only needs to perform two first contraction calculations for each ALS sweep, decreasing the leading order cost of a sweep from $2Ns^{N}R$ to  $ 4s^{N}R. $ 

Another alternative is to solve the least squares problem via the Gauss-Newton method. Although directly inverting the Hessian matrix for the problem is expensive (costs $O(N^3s^3R^3)$ if each dimension has size $s$), the matrices involved in the linear system are sparse and have much implicit structure. The cost of direct Hessian inversion can be reduced to $O(NR^6)$~\cite{tichavsky2013further} and of using implicit conjugate gradient method is $O(N^2sR^2)$ for each iteration~\cite{sorber2013optimization}. Additionally, it has been shown that higher decomposition accuracy can be reached with Gauss-Newton rather than ALS~\cite{singh2019comparison}. We refer readers to references for details of the Gauss-Newton implementations for CP decomposition~\cite{singh2019comparison,sorber2013optimization}. 

\textbf{Tucker decomposition.}
Tucker decomposition~\cite{tucker1966some} approximates a tensor by a core tensor contracted by orthogonal matrices along each mode. 
For an order $N$ input tensor $\tsr{X}$ with size $s_1\times\cdots\times s_N$, Tucker decomposition compresses it into $N$ matrices with orthogonal columns $\mat{A}_1, \cdots, \mat{A}_N$, size of each is $s_i \times R_i$ for $i\in\{1,\ldots, N\}$, and a core tensor $\tsr{G}$ with size $R_1\times\cdots\times R_N$. 
Similar to CP decomposition, the optimization for the Tucker decomposition is a least squares problem, where element-wise expression for the output of the tensor network $f$ in Equation~\ref{eq:leastsquare} denotes
\begin{align*}
f(\tsr{G}, \mat{A}_1, \ldots, \mat{A}_N)(x_1,\ldots,x_N) =  
\sum_{\{z_1,\ldots, z_N\}}\tsr{G}(z_1,\ldots,z_N)\prod_{r\in\{1,\ldots,N\}} \mat{A}_{r}(x_r,z_r).
\end{align*}
The ALS method for Tucker decomposition~\cite{andersson1998improving,kolda2009tensor}, which is also called the \textit{higher-order orthogonal iteration} (HOOI), proceeds by fixing all except one factor matrix, and computing a low-rank matrix factorization on the \textit{Tensor Times  Matrix-chain} (TTMc) $\tsr{Y}_{n}$ for $n\in\{1,\ldots,N\}$, to update that factor matrix and the core tensor. $\tsr{Y}_{n}$ is expressed as
    \begin{align*}
    \tsr{Y}_{n}(z_1,\ldots,z_{n-1},x_{n},z_{n+1},\ldots,z_{N})=
\sum_{\{x_1,\ldots,x_{n-1},x_{n+1},\ldots x_N\}}\tsr{X}(x_1,\ldots,x_N)\prod_{r\in\{1,\ldots,N\}, r\neq n} \mat{A}_{r}(x_r,z_r).
    \end{align*}
\normalsize
Then $\tsr{Y}_{n}$ is factored into a product of an orthogonal matrix $\mat{A}_{n}$ and the core tensor $\tsr{G}$, so that
    \(
    \mat{Y}_{n,(n)} \approx \mat{A}_{n}\mat{G}_{(n)} .
    \)
This factorization can be done by taking $\mat{A}_{n}$ to be the $R_n$ leading left singular vectors of $\mat{Y}_{n,(n)}$. TTMc is the computational bottleneck of Tucker-ALS. With the use of dimensions trees same as CP-ALS, the  computational complexity  for  a sweep of TTMc has the leading  order $O(4s^NR)$.

\textbf{Density Matrix Renormalization Group (DMRG).}
DMRG calculates the smallest eigenvalue of a Matrix Product Operator (MPO) with the corresponding eigenvector represented by a Matrix Product State (MPS).
MPS, which is also called tensor train, represents a high dimensional tensor into a linear tensor network. For an order $N$ input tensor $\tsr{V}$ with size $s_1\times\cdots\times s_N$, the MPS decomposition is expressed as
\[
\tsr{V}(x_1, \ldots, x_N) = \sum_{\alpha_0, \ldots, \alpha_N}\prod_{i=1}^N \tsr{A}_{i}(\alpha_{i-1}, x_i, \alpha_{i}),
\]
where $\tsr{A}_{i}\in\R^{R_{i-1}\times s_i \times R_{i}}$ and $R_0=R_N=1$. 
The MPO has the similar linear structure, each site is a 4-D tensor,
\[
\tsr{W}(x_1, \ldots, x_N, y_1, \ldots, y_N) = \sum_{\alpha_0, \ldots, \alpha_N}\prod_{i=1}^N \tsr{A}_{i}(\alpha_{i-1}, x_i, y_i, \alpha_{i}),
\]
and for each $i\in\{1,\ldots,N\}$, the $i$th and $i+N$th mode of $\tsr{W}$ have the same size.
The objective of DMRG is expressed as
\[
\min_{\tsr{V}}\psi(\tsr{V}) := \frac{
\vcr{v}^T_{(1:N)} \mat{W}_{(1:N)}\vcr{v}_{(1:N)}
}
{\|\vcr{v}_{(1:N)}\|^2
},
\]
where we are optimizing $\tsr{V}$ under the constraint that it has the MPS structure. 
DMRG optimizes this objective via alternating minimization, where in each local step the minimum of the objective w.r.t. one or two neighboring sites is achieved, and performs sweeps of the local steps until the results converged.
We refer readers to the tensornetwork website for algorithm details\footnote{https://tensornetwork.org/mps/algorithms/dmrg/}.

\vspace{4mm}

\subsection{Proofs for Structured Inverse Node Decomposition}\label{appendix:inverse}

In our program, for an implicit tensor constructed through several input tensors and an Einsum expression, our optimization algorithm finds the form of its decomposed tensors that obey the rules in Theorem~\ref{thm:inv}, thus helping the inverse.

At first, we define the terms \textit{decomposable tensor}, \textit{tensor inverse} and \textit{identity tensor} as follows:
\begin{defn}
\label{defn:decompose}
A tensor $\tsr{T} \in\mathbb{R}^{s_1\times \cdots \times s_N}$ is \textit{decomposable} if it can be written as the outer product of 2 smaller tensors. It can be written as
\[
\tsr{T}(x_1, \ldots, x_N) = \tsr{A}(y_1, \ldots, y_M)\tsr{B}(z_1, \ldots, z_K),
\]
where $\{y_1, \ldots, y_M\}\cup \{z_1, \ldots, z_K\} = \{x_1,\ldots, x_N\}$ and  $\{y_1, \ldots, y_M\}\cap \{z_1, \ldots, z_K\} = \emptyset$.
\end{defn}

\begin{defn}
\label{defn:inv}
For an even order tensor $\tsr{T} \in\mathbb{R}^{s_1\times \cdots \times s_{2N}}$,
let $R_1=\prod_{i=1}^{N}s_i, R_2=\prod_{i=N+1}^{2N}s_i$, if $R_1=R_2$,
its \textit{tensor inverse} $\tsr{T}^{-1}$ is defined as the inverse of the matricized tensor $\mat{T}$, where $\mat{T}\in \mathbb{R}^{R_1\times R_2}$.
\end{defn}
\begin{defn}
We use $\tsr{I}_T$ to denote a tensor has the same shape as $\tsr{T}$, and the matricized $\tsr{I}_T$ is an identity matrix. 
\end{defn}
Using the definitions above, we will show that if a tensor meets the requirement described in Theorem~\ref{thm:inv}, then we can transfer the tensor inverse into the inverse of its decomposed parts.
\begin{thm}
\label{thm:inv}
For an even order tensor $\tsr{T} \in\mathbb{R}^{s_1\times \cdots \times s_{2N}}$,
if it can be decomposed into 2 tensors $\tsr{A}$ and $\tsr{B}$ as:
\[
\tsr{T}(x_1, \ldots, x_{2N}) = \tsr{A}(y_1, \ldots, y_{2M})\tsr{B}(z_1, \ldots, z_{2K}),
\]
and the indices satisfy the following requirements:
\begin{enumerate}
    \item $\{y_1, \ldots, y_M\}\cup \{z_1, \ldots, z_K\} = \{x_1,\ldots, x_N\}$, and  $\{y_1, \ldots, y_M\}\cap \{z_1, \ldots, z_K\} = \emptyset$,
    \item $y_{i+M} = x_{j+N}$ if $y_{i} = x_{j}$ for $\forall i\in\{1,\ldots, M\}$, $z_{i+K} = x_{j+N}$ if $z_{i} = x_{j}$ for $\forall i\in\{1,\ldots, K\}$, 
    \item $\tsr{A}$ and $\tsr{B}$ are both invertible,
\end{enumerate}
then we have 
$
\tsr{T}^{-1}(x_1, \ldots, x_{2N}) = \tsr{A}^{-1}(y_1, \ldots, y_{2M})\tsr{B}^{-1}(z_1, \ldots, z_{2K}).
$
\end{thm}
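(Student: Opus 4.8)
The plan is to reduce the tensor statement to the familiar matrix identity $(\mat{A}\otimes\mat{B})^{-1}=\mat{A}^{-1}\otimes\mat{B}^{-1}$. First I would matricize all three tensors according to Definition~\ref{defn:inv}, sending the first $N$ indices of $\tsr{T}$ to rows and the last $N$ to columns, and likewise for $\tsr{A}$ (splitting $(y_1,\ldots,y_{2M})$ into the row block $(y_1,\ldots,y_M)$ and column block $(y_{M+1},\ldots,y_{2M})$) and for $\tsr{B}$; call the resulting matrices $\mat{T}$, $\mat{A}$, $\mat{B}$. Since $\tsr{T}$ is an outer product of $\tsr{A}$ and $\tsr{B}$, every entry of $\mat{T}$ factors as a product of one entry of $\mat{A}$ and one entry of $\mat{B}$, which is precisely the entrywise form of a Kronecker product --- up to the way the row and column multi-indices are ordered.

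The key step is to account for this ordering. Condition~1 guarantees that the row multi-index $(x_1,\ldots,x_N)$ of $\mat{T}$ is a shuffle of the $\mat{A}$-row index $(y_1,\ldots,y_M)$ and the $\mat{B}$-row index $(z_1,\ldots,z_K)$, so there is a single row permutation matrix $\mat{P}$ reordering $(x_1,\ldots,x_N)$ into the block order $(y_1,\ldots,y_M,z_1,\ldots,z_K)$ used by the Kronecker product; a column permutation $\mat{Q}$ similarly acts on $(x_{N+1},\ldots,x_{2N})$. The crucial use of Condition~2 is that it forces the paired index $x_{j+N}$ to occupy the same relative slot among the columns that $x_j$ occupies among the rows, so that $\mat{P}$ and $\mat{Q}$ are the \emph{same} permutation of index slots. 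Consequently $\mat{T}=\mat{P}^{\!\top}(\mat{A}\otimes\mat{B})\mat{P}$, a conjugation by a single permutation matrix $\mat{P}$, which is square because $R_1=R_2$.

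Given Condition~3, both $\mat{A}$ and $\mat{B}$ are invertible, so $\mat{A}\otimes\mat{B}$ is invertible and I can invert the conjugated form directly, $\mat{T}^{-1}=\mat{P}^{\!\top}(\mat{A}\otimes\mat{B})^{-1}\mat{P}=\mat{P}^{\!\top}(\mat{A}^{-1}\otimes\mat{B}^{-1})\mat{P}$, using the Kronecker inverse rule quoted in Section~\ref{sec:opt}. Undoing the matricization and the permutation $\mat{P}$ reinterprets $\mat{P}^{\!\top}(\mat{A}^{-1}\otimes\mat{B}^{-1})\mat{P}$ as the outer product $\tsr{A}^{-1}(y_1,\ldots,y_{2M})\tsr{B}^{-1}(z_1,\ldots,z_{2K})$ with the indices restored to the slots $x_1,\ldots,x_{2N}$, which is the claimed identity. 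I expect the main obstacle to be the bookkeeping in the key step: rigorously proving that the interleaving of the $y$- and $z$-slots is identical on the row and column sides, i.e.\ that Condition~2 really collapses the two permutations $\mat{P}$ and $\mat{Q}$ into one, since this is exactly what makes the conjugation --- and hence the clean inversion --- valid.
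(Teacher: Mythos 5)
Your proof is correct, but it takes a genuinely different route from the paper's. The paper argues by direct verification in Einsum index notation: it defines $\tsr{C}$ as the outer product of $\tsr{A}^{-1}$ and $\tsr{B}^{-1}$, relabels its dummy indices so it can be contracted against $\tsr{T}$ over the column modes, and then observes that conditions 1 and 2 make the contraction factor into two independent contractions, $\tsr{A}$ against $\tsr{A}^{-1}$ and $\tsr{B}$ against $\tsr{B}^{-1}$, each yielding an identity tensor ($\tsr{I}_A$, $\tsr{I}_B$) whose outer product is $\tsr{I}_T$; hence $\tsr{C}=\tsr{T}^{-1}$. You instead reduce the statement to the known matrix identity $(\mat{A}\otimes\mat{B})^{-1}=\mat{A}^{-1}\otimes\mat{B}^{-1}$ by exhibiting the matricization $\mat{T}$ as a permutation conjugate of $\mat{A}\otimes\mat{B}$. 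The decoupling that the paper exploits entrywise is exactly what your permutation bookkeeping encodes, so the two arguments carry the same mathematical content; yours is more modular and makes explicit the connection to the Kronecker rule that motivates the optimization in Section~\ref{sec:opt}, while the paper's is more elementary and self-contained (no permutation matrices, just index manipulation), and is closer to how the kernel actually manipulates Einsum expressions. One caveat in your sketch: for the row and column permutations to be literally the \emph{same} matrix $\mat{P}$, you need the paired mode sizes to agree, $s_j=s_{j+N}$; this is implicitly assumed by the theorem's notation (and equally by the paper's proof, which feeds the indices $x_{N+1},\ldots,x_{2N}$ into the first $N$ slots of $\tsr{C}$), and even without that assumption your argument survives in the two-sided form $\mat{T}=\mat{P}^{\top}(\mat{A}\otimes\mat{B})\mat{Q}$ with $\mat{T}^{-1}=\mat{Q}^{\top}\bigl(\mat{A}^{-1}\otimes\mat{B}^{-1}\bigr)\mat{P}$, so this is a point to make rigorous rather than a gap.
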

\begin{proof}
Let tensor $\tsr{C}$ be the outer product of tensor $\tsr{A}^{-1}$ and $\tsr{B}^{-1}$ based on the following element-wise expression:
\[
\tsr{C}(x_1, \ldots, x_{2N}) = \tsr{A}^{-1}(y_1, \ldots, y_{2M})\tsr{B}^{-1}(z_1, \ldots, z_{2K}),
\]
we can rewrite the equation above with different index notations:
\[
\tsr{C}(x_{N+1}, \ldots, x_{2N}, a_1, \ldots, a_N) = \tsr{A}^{-1}(y_{M+1}, \ldots, y_{2M},y_{2M+1}, \ldots, y_{3M})\tsr{B}^{-1}(z_{K+1}, \ldots, z_{2K},z_{2K+1}, \ldots, z_{3K}),
\]
where $\{y_{2M+1}, \ldots, y_{3M}\}\cup \{z_{2K+1}, \ldots, z_{3K}\} = \{a_1,\ldots, a_N\}$, $\{y_{2M+1}, \ldots, y_{3M}\}\cap \{z_{2K+1}, \ldots, z_{3K}\} = \emptyset$,
and $y_{i+2M} = a_{j}$ if $y_{i} = x_{j}$ for $\forall i\in\{1,\ldots, M\}$, $z_{i+2K} = a_{j}$ if $z_{i} = x_{j}$ for $\forall i\in\{1,\ldots, K\}$.

We denote the matrix multiplication of the matricized $\tsr{T}$ and $\tsr{C}$ as $\tsr{Z}$, and their relations can be shown as
\begin{align*}
\tsr{Z}(x_{1}, \ldots, x_{N}, a_1, \ldots, a_N) 
= 
\tsr{T}(x_1, \ldots, x_{2N}) \tsr{C}(x_{N+1}, \ldots, x_{2N}, a_1, \ldots, a_N) \\
= \tsr{A}(y_1, \ldots, y_{2M}) \tsr{B}(z_1, \ldots, z_{2K})
\tsr{A}^{-1}(y_{M+1}, \ldots, y_{2M},y_{2M+1}, \ldots, y_{3M}) 
\tsr{B}^{-1}(z_{K+1}, \ldots, z_{2K},z_{2K+1}, \ldots, z_{3K}) \\
= \tsr{I}_A(y_1, \ldots, y_M, y_{2M+1}, \ldots, y_{3M})
\tsr{I}_B(z_1, \ldots, z_K, z_{2K+1}, \ldots, z_{3K})
= \tsr{I}_{T}(x_{1}, \ldots, x_{N}, a_1, \ldots, a_N).
\end{align*}
Therefore, the theorem is proved.
\end{proof}

\vspace{4mm}

\subsection{Detailed Optimization Algorithms}
\label{subsec:appendix_opt}
In this Section, we provide detailed explanations on the union-find data structure for Einsum.
In addition, we provide detailed pseudo-codes for the distribution, Einsum fusion and dimension tree generation kernels discussed in Section~\ref{sec:opt}.

\textbf{Union-find for Einsum.}
The union-find (UF) representation for Einsum is a key ingredient of the optimization kernels. 
In the UF graph, each node represents one dimension of a specific tensor in the Einsum graph, and each 
edge represents a connection between two dimensions in an Einsum expression, where the connection is denoted by two dimensions sharing the same character in an Einsum string.
Downstream tasks can leverage this canonical representation of the Einsum graph for analysis. The algorithm of graph building is illustrated in Algorithm~\ref{alg:ufbuilder}. We use $\text{Einsum}\_\text{subscript}$ to denote the Einsum expression of an Einsum node, for example the string '$ij, jk \to ik$'. 

\begin{algorithm}[h]
\DontPrintSemicolon
\caption{BuildUF}
\label{alg:ufbuilder}
    \SetAlgoLined
    \SetKwInOut{Input}{input}\SetKwInOut{Output}{output}
    \Input{Einsum Graph: G}
    \Output{Union-find data structure: UF}

    Initialize a union-find data structure UF
    
    Initialize a map from Einsum character to tensor dimension DM
    
    \For{all einsum nodes $N$ in G}{
        \For {all characters C1 in N.einsum$\_$subscript} {
            \For {all characters C2 in N.einsum$\_$subscript} {
                \If{C1 == C2}{
                    UF.connect(DM[C1], DM[C2])
                }
            }
        }
    }

    return UF
\end{algorithm}

\begin{algorithm}[h]
\DontPrintSemicolon
\caption{Distribution}
\label{alg:distribute}
    \SetAlgoLined
    \SetKwInOut{Input}{input}\SetKwInOut{Output}{output}
    \Input{Graph: G}
    \Output{Distributed Graph: DG}
    \SetKwFunction{Distribute}{Distribute}
    DG = G

    \While{True} {
        \For{All DistributeOp nodes $\{$Ops$\}$ in DG} {
            \If{All Einsum nodes are topologically ahead of $\{$Ops$\}$} {
            return DG
            }
            \For{Op $\in$ $\{$Ops$\}$} {
                DG = \Distribute(Op, DG)
                \Comment{Distribute does 
                Einsum((a+b),c) $\rightarrow$ Einsum(a,c) + Einsum(b,c), where $+$ is the Op}
            } 
        }
    }
\end{algorithm}

\begin{algorithm}[h]
\DontPrintSemicolon
\caption{Einsum fusion}
\label{alg:fusion}
    \SetAlgoLined
    \SetKwInOut{Input}{input}\SetKwInOut{Output}{output}
    \Input{Einsum Tree: T}
    \Output{Fused Einsum Node: FN}
    \SetKwFunction{Linearize}{Linearize}
    \SetKwFunction{Declone}{Declone}

    LT = \Linearize(T)
    
    UF = BuildUF(LT)
    
    UF.Assign() \Comment{Assign each disjoint subset an unique character.}
    
    Init FN (sink: T.root, source: T.leaves)
    
    FN.genereateSubscript()
    \Comment{Generate FT.subscript based on input nodes' assigned characters.}
    
    FN = \Declone(FN)
    
    return FN
\end{algorithm}

\begin{algorithm}[h]
\DontPrintSemicolon
\caption{Dimension tree construction}
\label{alg:dt}
    \SetAlgoLined
    \SetKwInOut{Input}{input}\SetKwInOut{Output}{output}
    \Input{Einsum node List: NL, Input node list: IL}
    \Output{Updated Einsum node List: UL}
    \SetKwFunction{optconstraint}{Opt\_contraction\_path\_w\_constraint}

    $n=$ length(NL)

    UL = NL

    \For{i $\in$ $\{1,\ldots,n\}$} {
        contract\_order\_list = $[ \text{I[N],} \ldots \text{, I[i+1], I[1],} \ldots \text{, I[i-1]} ]$

        contract\_order\_list = part of contract\_order\_list where all elements are in UL[i].inputs

        UL[i] = \optconstraint(UL[i], contract\_order\_list)
    }
    return UL
\end{algorithm}

\vspace{4mm}

\subsection{Additional Benchmark Results}
We present the additional benchmark results for the kernels in the alternating minimization problems in Figure~\ref{fig:als_appendix}. For both CP and Tucker decompositions, we fix the tensor size in each mode and the decomposition rank, and compare the performance between AutoHOOT and other libraries with different input tensor order. As can be seen, the expressions generated with the dimension tree algorithm outperform all the other implementations.
For DMRG, we fix the physical dimension sizes and the MPO/MPS ranks, and compare the performance with different number of sites. As can be seen, AutoHOOT and Quimb have comparable performance.

\begin{figure*}[h]
\centering
\subfloat[NumPy, CPD, $s=R=30$]{
\includegraphics[width=.32\textwidth]{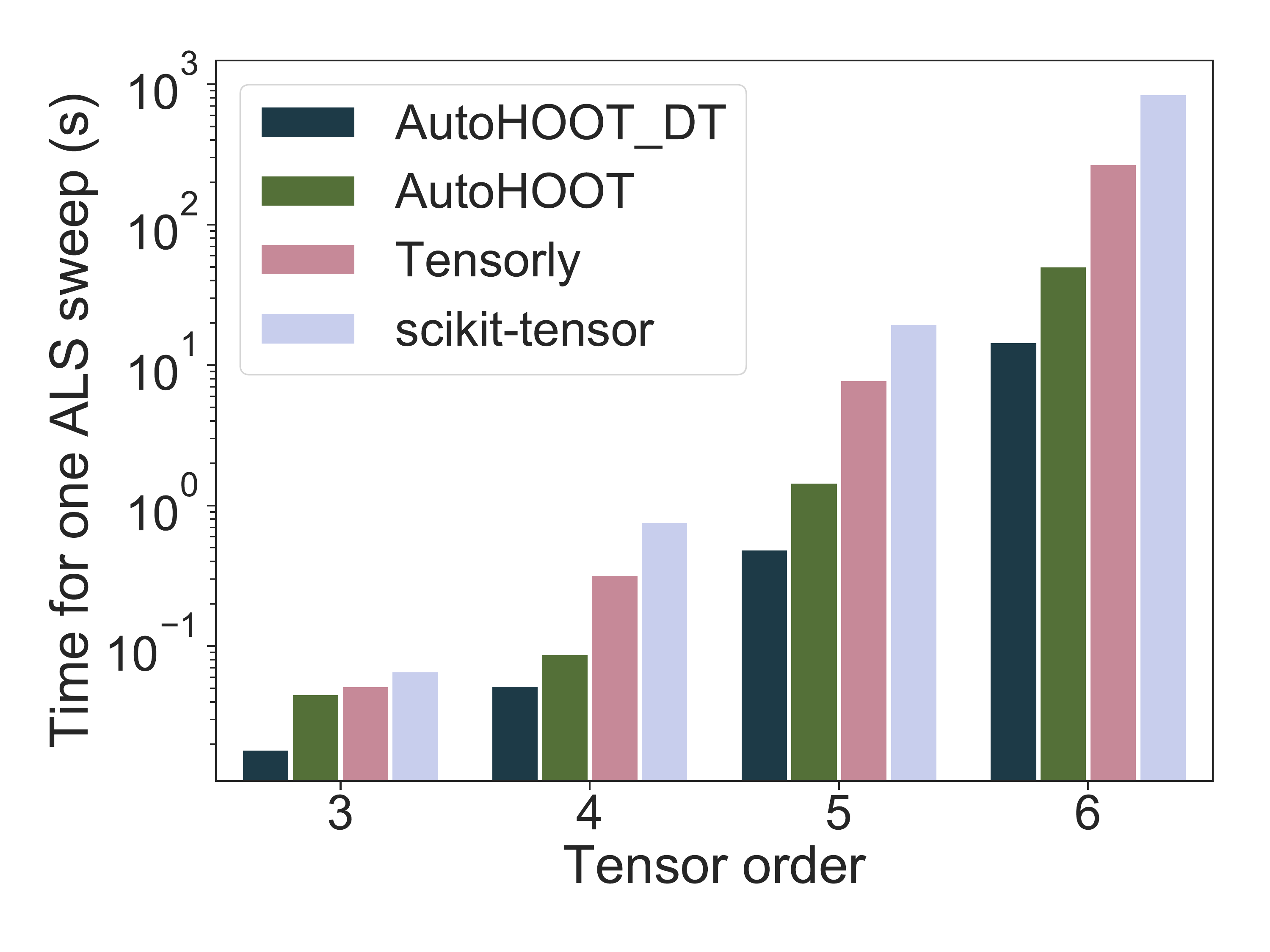}
\label{}
}
\subfloat[NumPy, Tucker, $s=30, R=10$]{
\includegraphics[width=.32\textwidth]{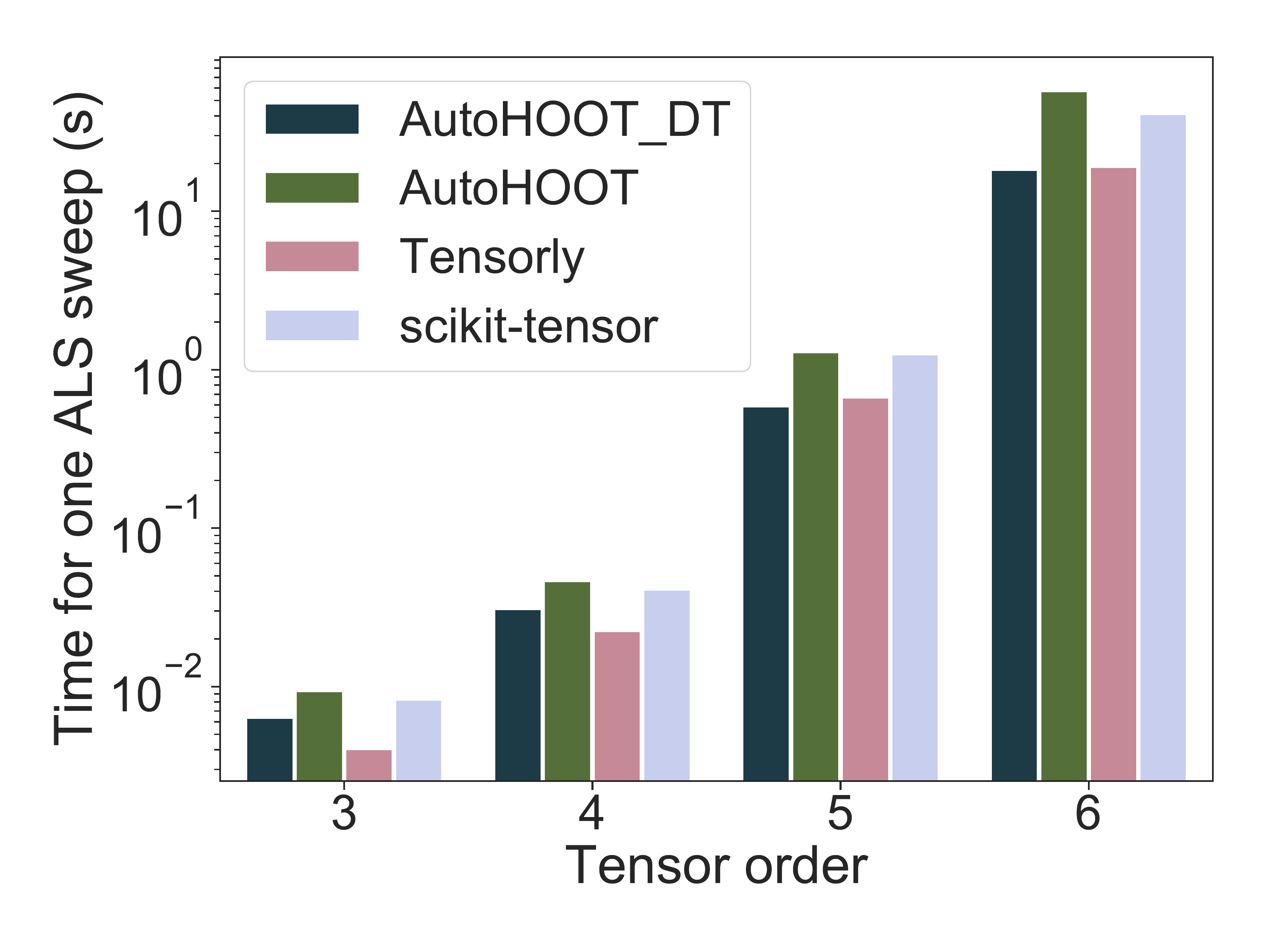}
\label{}
}
\subfloat[NumPy, DMRG, $s=R=40$]{
\includegraphics[width=.32\textwidth]{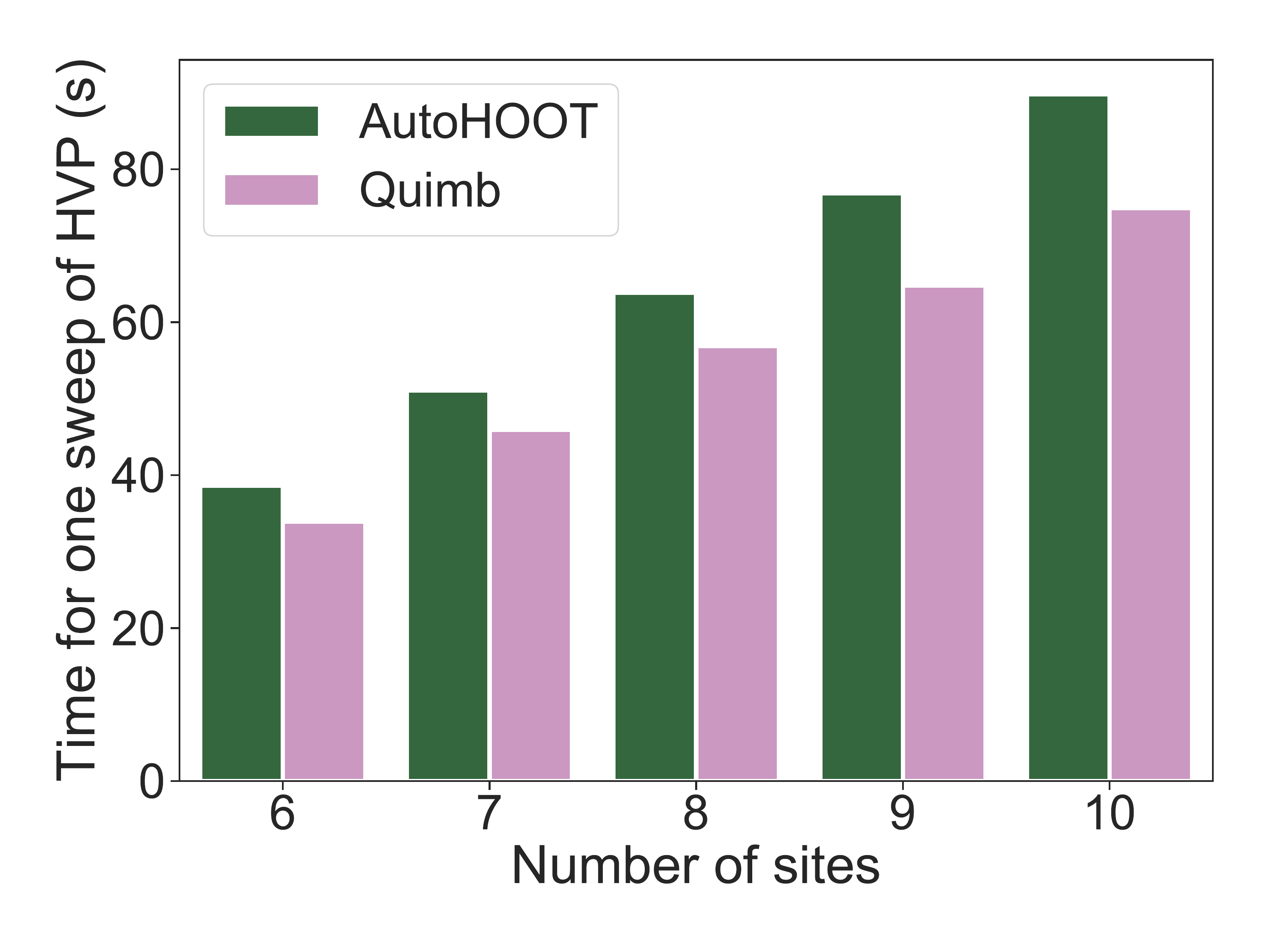}
\label{}
}

\centering
\subfloat[TensorFlow, CPD, $s=R=30$]{
\includegraphics[width=.32\textwidth]{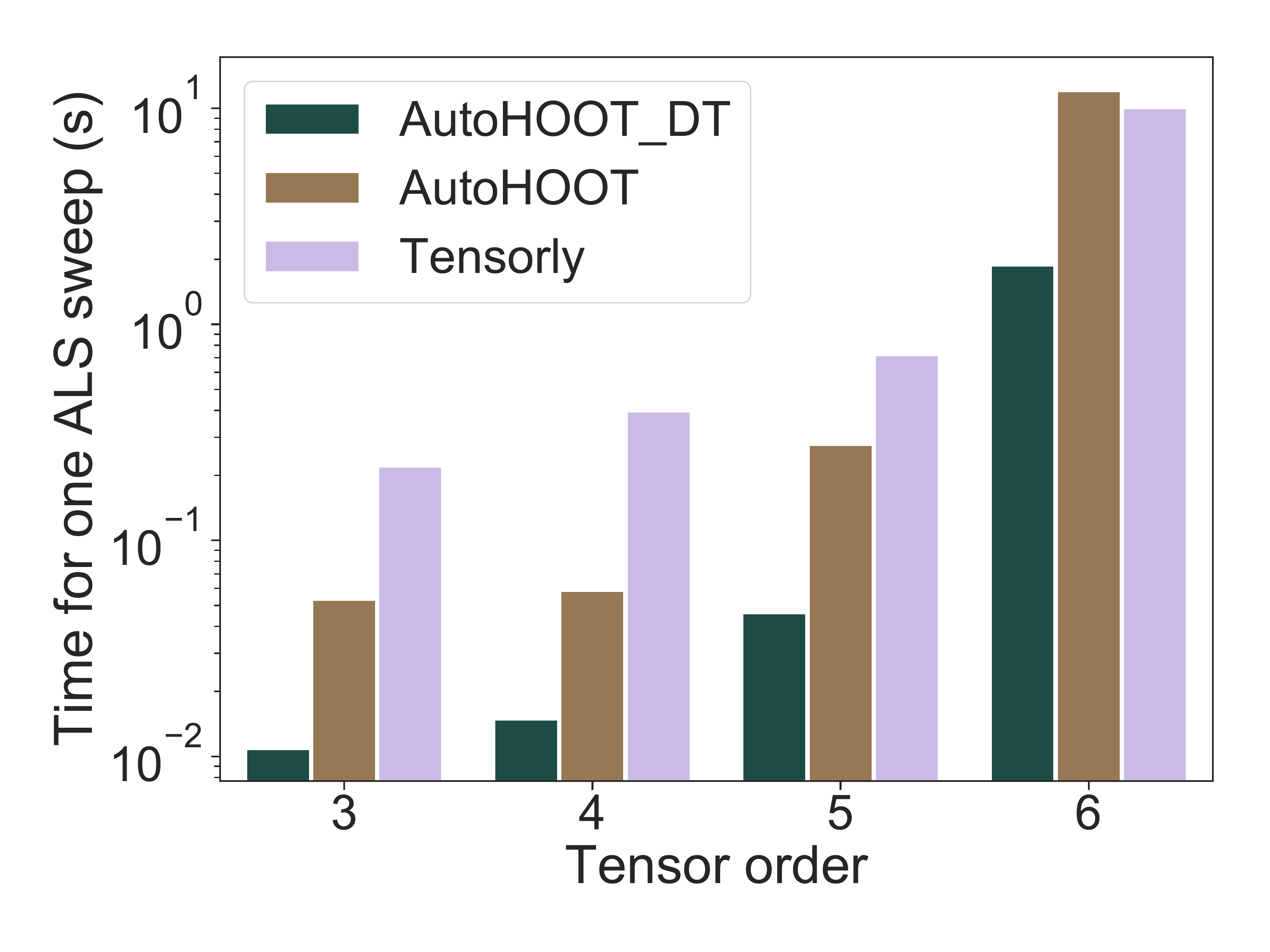}
\label{}
}
\subfloat[TensorFlow, Tucker, $s=30, R=10$]{
\includegraphics[width=.32\textwidth]{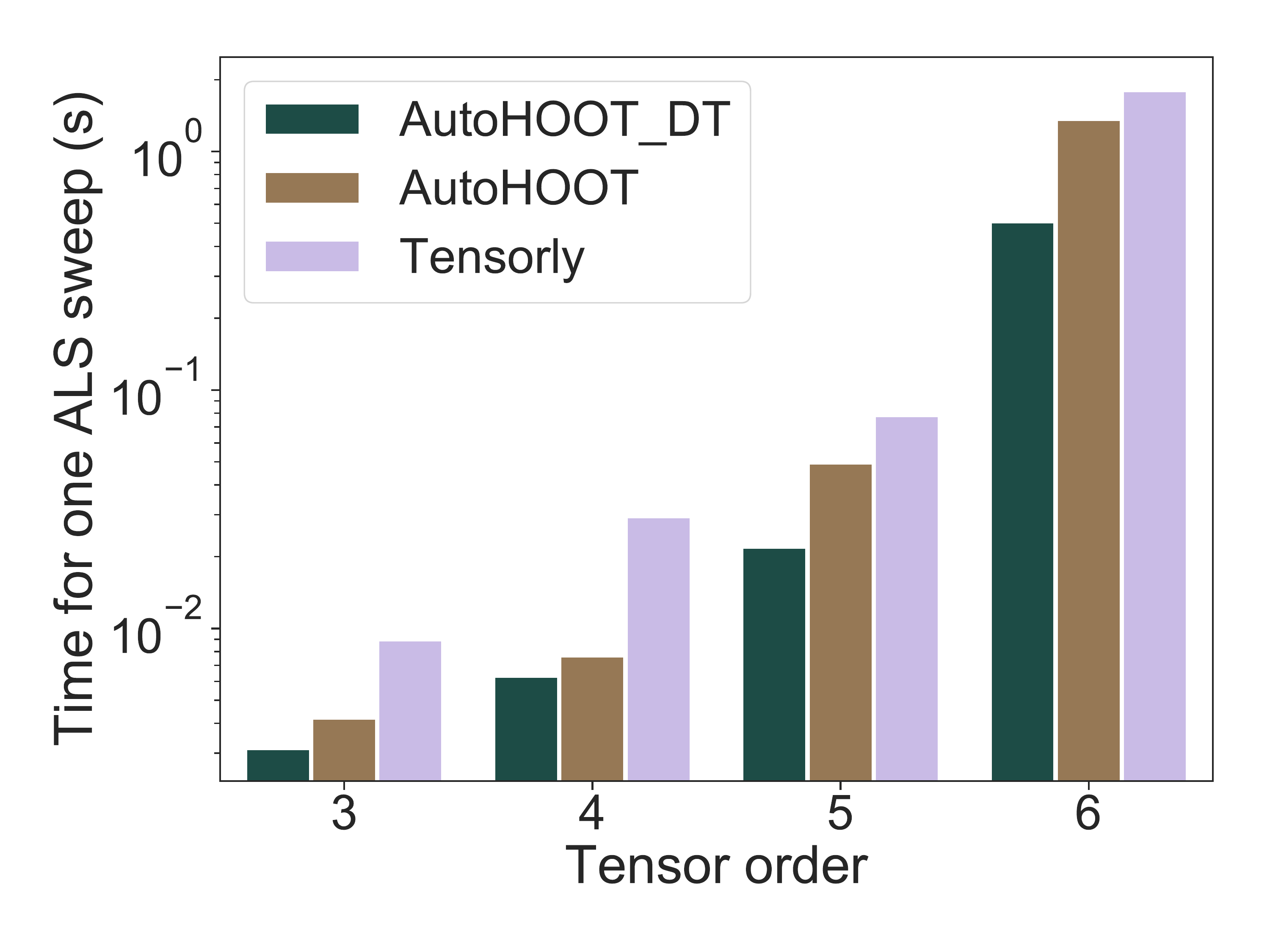}
\label{}
}
\subfloat[TensorFlow, DMRG, $s=R=40$]{
\includegraphics[width=.32\textwidth]{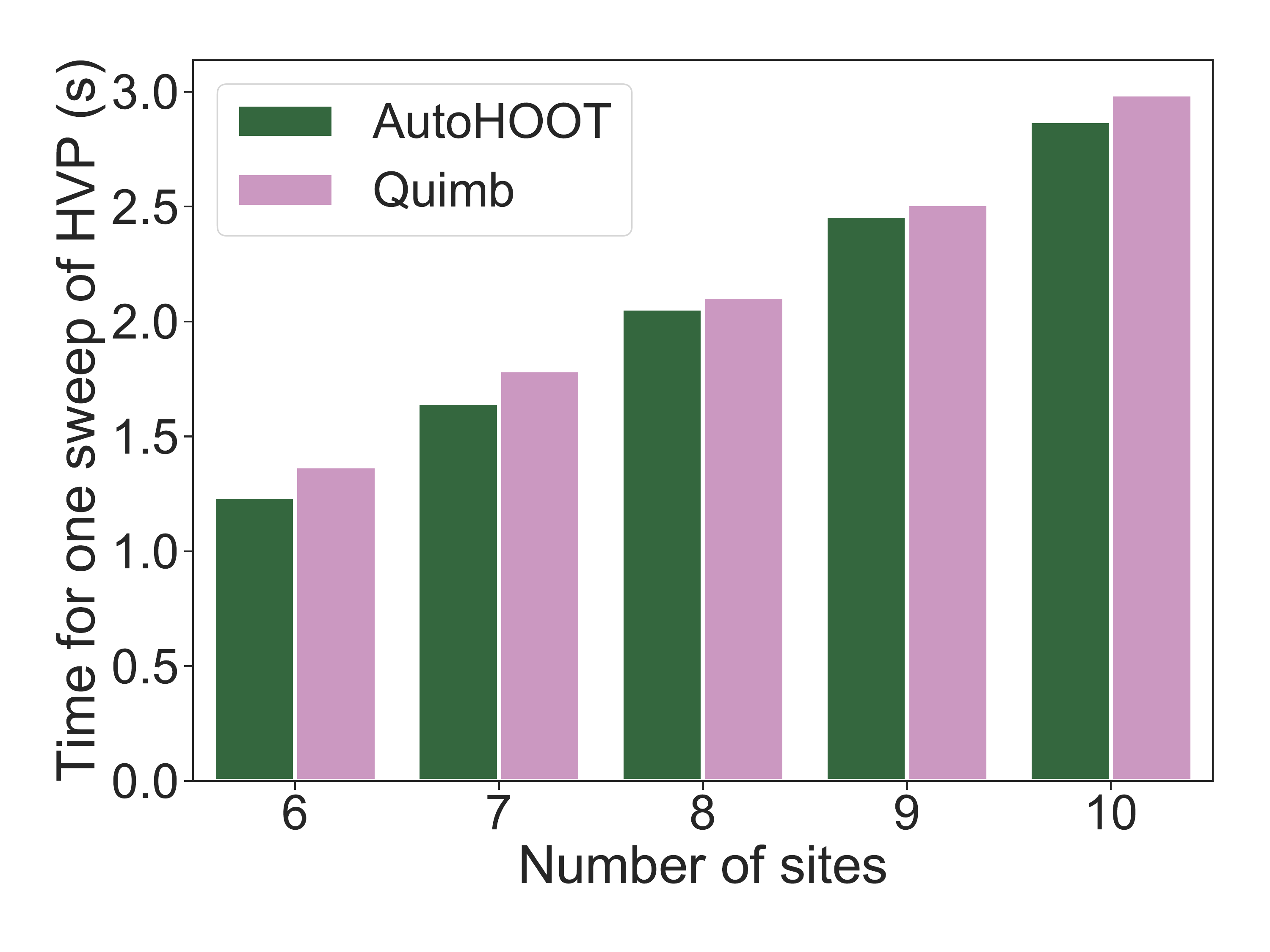}
\label{}
}
\caption{Additional AutoHOOT performance results for kernels in the alternating minimization. Experiments with NumPy backend are executed on the single process on CPU, and experiments with TensorFlow backend are executed on GPU.}
\label{fig:als_appendix}
\end{figure*}

\end{document}